\documentclass[a4paper,UKenglish,cleveref, autoref]{lipics-v2019} %

 \hideLIPIcs \nolinenumbers

\usepackage{todonotes}
\usepackage{xspace}
\usepackage{multicol}

\bibliographystyle{plainurl}%

\usepackage{amssymb}
\usepackage{bm}
\usepackage{mathtools}
\usepackage{stmaryrd}
\usepackage{mathpartir}

\usepackage{etoolbox}
\usepackage{collect}
\usepackage{environ}

\definecollection{proofs}
\definecollection{remarks}
\newcommand{\moveproof}[2]{\begin{collect}{proofs}{}{} \paragraph*{Proof of \Cref{#2}:}\begin{proof}#1\end{proof}\end{collect}}
\newcommand{\moveappendix}[2]{\begin{collect}{remarks}{}{}#2 #1\end{collect}}

\ifcsdef{internaldetails}{
  \NewEnviron{onlyfullversioncomment}[1]{\BODY}
  \NewEnviron{onlyinternaldetails}[1]{\BODY}
  \NewEnviron{onlyfullversion}[1]{\BODY}
  \NewEnviron{onlyicalpversion}{}
  \NewEnviron{mproof}[1]{\begin{proof}\BODY\end{proof}}
}{}
\ifcsdef{fullversion}{
  \NewEnviron{onlyfullversioncomment}[1]{\BODY}
  \NewEnviron{onlyinternaldetails}[1]{}
  \NewEnviron{onlyfullversion}[1]{\BODY}
  \NewEnviron{onlyicalpversion}{}
  \NewEnviron{mproof}[1]{\expandafter\moveproof\expandafter{\BODY}{#1}}
}{}

\ifcsdef{icalpversion}{
  \NewEnviron{onlyfullversioncomment}[1]{}
  \NewEnviron{onlyinternaldetails}[1]{}
  \NewEnviron{onlyfullversion}[1]{\expandafter\moveappendix\expandafter{\BODY}{#1}}
  \NewEnviron{onlyicalpversion}{\BODY}
  \NewEnviron{mproof}[1]{\ifcsdef{inappendix}{
      \begin{proof}
        \BODY
      \end{proof}}{\expandafter\moveproof\expandafter{\BODY}{#1}}}
}
{}

\NewEnviron{technicalappendix}[1]{\expandafter\moveappendix\expandafter{\BODY}{#1}}

\usepackage{coqtheorem}
\setBaseUrl{{https://ps.uni-saarland.de/extras/wcbv-reasonable/doc/}}

\usepackage[style=english]{csquotes}
\let\doquote\enquote
\renewcommand\enquote[1]{\textit{\doquote{#1}}}

\usepackage{enumitem} %

\usepackage{macros}

\renewcommand{\L}{\ensuremath{\mathsf{L}}\xspace}
\renewcommand{\bar}[1]{\overline{#1}}

\newcommand{\tenc}[1]{\raisebox{-1pt}{\ensuremath{{}^\ulcorner}}{\hspace{-3pt}#1} \hspace{-2pt}\raisebox{-1pt}{\ensuremath{{}^\urcorner}}}
\newcommand{\encode}[1]{\tenc{#1}}%

\title{The Weak Call-By-Value $\lambda$-Calculus is Reasonable for Both Time and Space}

\titlerunning{The Weak Call-By-Value $\bm\lambda$-Calculus is Reasonable for Both Time and Space}

\author{Yannick Forster}{Saarland University, Saarland Informatics Campus (SIC), Saarbrücken, Germany}{forster@ps.uni-saarland.de}{}{}%

\author{Fabian Kunze}{Saarland University, Saarland Informatics Campus (SIC), Saarbrücken, Germany}{kunze@ps.uni-saarland.de}{}{}%

\author{Marc Roth}{Cluster of Excellence (MMCI), Saarland Informatics Campus (SIC), Saarbrücken, Germany}{mroth@mmci.uni-saarland.de}{https://orcid.org/0000-0003-3159-9418}{}

\authorrunning{Y. Forster and F. Kunze and M. Roth}

\Copyright{Yannick Forster and Fabian Kunze and Marc Roth}%

\ccsdesc[500]{Theory of computation~Computational complexity and cryptography}
\ccsdesc[500]{Mathematics of computing~Lambda calculus}

\keywords{invariance thesis, lambda calculus, weak call-by-value reduction, time and space complexity, abstract machines}%

\supplement{A \begin{onlyicalpversion}{}full version of the paper and a\end{onlyicalpversion} Coq formalisation of the results in Section~\ref{sec:prelims} and Section~\ref{sec:abstract_machines} \begin{onlyfullversioncomment}{}is\end{onlyfullversioncomment}{}\begin{onlyicalpversion}{}are\end{onlyicalpversion}{} available at \url{https://ps.uni-saarland.de/extras/wcbv-reasonable}.}%

\EventEditors{John Q. Open and Joan R. Access}
\EventNoEds{2}
\EventLongTitle{42nd Conference on Very Important Topics (CVIT 2016)}
\EventShortTitle{CVIT 2016}
\EventAcronym{CVIT}
\EventYear{2016}
\EventDate{December 24--27, 2016}
\EventLocation{Little Whinging, United Kingdom}
\EventLogo{}
\SeriesVolume{42}
\ArticleNo{23}

\newcommand{\FK}[2][]{\todo[color=orange, #1]{Fabian: #2}}

\newcommand{\True}{\mathtt{true}}
\newcommand{\False}{\mathtt{false}}

\begin{document}

\maketitle

\begin{abstract}
We study the weak call-by-value $\lambda$-calculus as a model for computational complexity theory and establish the natural measures for time and space -- the number of beta-reductions and the size of the largest term in a computation -- as reasonable measures with respect to the invariance thesis of Slot and van Emde Boas [STOC~84]. More precisely, we show that, using those measures, Turing machines and the weak call-by-value $\lambda$-calculus can simulate each other within a polynomial overhead in time and a constant factor overhead in space for all computations that terminate in (encodings) of \enquote{true} or \enquote{false}. We consider this result as a solution to the long-standing open problem, explicitly posed by Accattoli [ENTCS~18], of whether the natural measures for time and space of the $\lambda$-calculus are reasonable, at least in case of weak call-by-value evaluation.

Our proof relies on a hybrid of two simulation strategies of reductions in the weak call-by-value $\lambda$-calculus by Turing machines, both of which are insufficient if taken alone. The first strategy is the most naive one in the sense that a reduction sequence is simulated precisely as given by the reduction rules; in particular, all substitutions are executed immediately. This simulation runs within a constant overhead in space, but the overhead in time might be exponential. The second strategy is heap-based and relies on structure sharing, similar to existing compilers of eager functional languages. This strategy only has a polynomial overhead in time, but the space consumption might require an additional factor of $\log n$, which is essentially due to the size of the pointers required for this strategy. Our main contribution is the construction and verification of a space-aware interleaving of the two strategies, which is shown to yield both a constant overhead in space \emph{and} a polynomial overhead in time.
\end{abstract}

\section{Introduction}

Turing machines are the de-facto foundation of modern computability and complexity theory, in part due to the conceptual simplicity of their definition.
However, this simplicity is also one of the biggest disadvantages:
When it comes to detailed or formal reasoning, Turing machines soon become impossible to treat, because they lack compositionality and heavy logical machinery has to be used to reason about them.
This is best reflected by the fact that modern day researchers in computability and complexity theory usually have not faced explicit Turing machines since their undergraduate studies.
Instead, it is common to rely on pseudo code or mere algorithmic descriptions.
For computability theory, other models of computation like RAM machines, recursive functions or variants of the $\lambda$-calculus can be used if details are of interest, because the notion of computation is invariant under changing the model.
Especially the $\lambda$-calculus shines in this aspect, because tree-like inductive datatypes can be directly encoded and equational reasoning is accessible to verify the correctness of programs, which even makes the $\lambda$-calculus feasible as a model to formalise computability theory in proof assistants~\cite{Norrish2011,Forster17}.
However, this notion of \emph{invariance} does not suffice for complexity theory.
As stated by Slot and van Emde Boas~\cite{Slot:1984:TVC:800057.808705}:
\begin{center}
\textit{\enquote{Reasonable} machines can simulate each other within a polynomially bounded overhead in time and a constant factor overhead in space.}
\end{center}
If only reasonable machines are considered, this invariance thesis makes complexity classes robust under changing the model of computation. %
Until now, only sequential models of computation have been shown to fulfil this strong notion of invariance with natural complexity measures for time and space~\cite{dershowitz_invariance_nodate}.
The time and space complexity measures known to be reasonable for the full $\lambda$-calculus are ``total ink used'' and ``maximum ink used'' in a computation~\cite{lawall1996optimality}.
While the notion for space is natural, the notion for times is very unnatural and of no real interest.
Other measures rely on concrete implementations, giving no satisfying answer to the question whether the $\lambda$-calculus can be considered reasonable.

Dal Lago and Martini~\cite{DalLagoMartini08} gave a preliminary result in 2008 for the weak call-by-value $\lambda$-calculus and showed that counting $\beta$-steps while taking the size of $\beta$-redices into account is a reasonable measure for time.
In 2014 Accattoli and Dal Lago~\cite{indeed} showed that counting (leftmost-outermost) $\beta$-steps makes the full $\lambda$-calculus reasonable for time, starting a long line of research regarding measures for and implementations of the $\lambda$-calculus (see e.g.~\cite{accattoli2017complexity}).
Whether the natural measure for space, i.e. the size of the largest term in a computation, can be used together with the number of $\beta$-steps or how it has to be adapted is a long-standing open problem.

We solve this problem for the deterministic weak call-by-value $\lambda$-calculus (which we call $\L$) and show that the size of the largest intermediate term in a reduction makes \L a reasonable machine in the strong sense.
We consider our solution more than just a partial solution on the way to answering the question for the full $\lambda$-calculus in several aspects:
First, weak call-by-value evaluation is the standard model of eager functional programming languages.
Second, \L is already Turing-complete and one does not gain more power for the implementation or verification of algorithms by the strong reduction allowed in the full $\lambda$-calculus.
Third, from the complexity-theoretic point of view, the problem is solved: A certain form of the $\lambda$-calculus can be used to spell out arguments.
However, from an implementation point of view, many questions remain open: Our simulation uses the freedom given by ``polynomial overhead'' and should not be seen as a proposal for a uniform, canonical implementation, which is still very much desirable.

In what follows, we explain how to simulate \L on Turing machines with polynomial overhead in time and linear overhead in space, based on the natural measures, and vice-versa.

\begin{definition}\label{def:measures}
For a closed term $s$ that reduces to a normal form $\lambda x. u$
\[s = s_0 \red s_1 \red \cdots \red s_k=\lambda x.u \]
we define the time consumption of the computation to be $\Time{s}=k$ and the space consumption to be $\Space{s}= \max_{i=0}^k \size{s_i} $ where $\size s$ is the size of $s$.%
\end{definition}

\subparagraph{Result and Discussion}
We prove that the weak call-by-value $\lambda$-calculus is a reasonable machine with respect to the natural time and space measures defined above. For the formal statement we fix a finite alphabet $\Sigma$ and say that a function $f:\Sigma^\ast \to \Sigma^\ast$ is \emph{computable} by $\L$ in time $\mathcal{T}$ and space $\mathcal{S}$ if there exists an $\L$-term $s_f$ such that for all $x\in \Sigma^\ast$ we have that
\[s_f \encode{x} \red^\ast \encode{f(x)}~~\text{and}~~ \Time{s_f \encode x} \leq \mathcal{T}(|x|) ~~\text{and} ~~ \Space{s_f \encode x} \leq \mathcal{S}(|x|) \,.\]
Here $\encode{\,\cdot\,}$ is an encoding of strings over $\Sigma$.
\begin{theorem}[]\label{thm:intro_main}
Let $\Sigma$ be a finite alphabet such that $\{\True,\False\}\subseteq \Sigma$ and let $f:\Sigma^\ast \rightarrow \{\True,\False\}$ be a function. Furthermore, let $\mathcal{T}, \mathcal{S} \in \Omega(n)$.
\begin{enumerate}
\item If $f$ is $\L$-computable in time $\mathcal{T}$ and space $ \mathcal{S}$, then $f$ is computable by a Turing machine in time $\bigO{\poly{\mathcal{T}(n)}}$ and space $\bigO{ \mathcal{S}(n)}$.
\item If $f$ is computable by a Turing machine in time $\mathcal{T}$ and space $ \mathcal{S}$, then $f$ is $\L$-computable in time $\bigO{\poly{\mathcal{T}(n)}}$ and space $\bigO{ \mathcal{S}(n)}$.
\end{enumerate}
\end{theorem}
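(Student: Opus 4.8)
The plan is to prove the two directions separately, since each corresponds to one half of the invariance thesis. For the second direction (Turing machines simulated by \L), I would first fix a reasonable encoding of Turing-machine configurations as \L-terms: the tape contents as encoded lists of symbols, the state as an encoded numeral, and the head position implicitly via the split of the tape into left and right halves. The key step is to build an \L-term that implements one step of the transition function, reading the symbol under the head, looking up the transition, and updating state and tape. Iterating this term $\mathcal{T}(|x|)$ times via a step-indexed recursion yields the simulation. The crucial accounting is that each simulated step costs only a constant number of $\beta$-reductions times the current tape size, and that the largest intermediate term never exceeds a constant factor times the current configuration size; since the Turing machine uses space $\mathcal{S}(n)$ and the encoding is size-linear, both the time and space bounds follow, with the final $\encode{\True}$ or $\encode{\False}$ extracted from the accepting configuration.

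The first direction (\L simulated by Turing machines) is the heart of the paper and the main obstacle, and it is exactly the hybrid-strategy construction advertised in the abstract. The plan is to introduce two simulation strategies and combine them. The naive strategy simulates each $\beta$-step by performing the substitution explicitly on a tape-stored representation of the term; this keeps the tape contents at most a constant factor larger than the largest term $s_i$, giving space $\bigO{\mathcal{S}(n)}$, but a single substitution may duplicate subterms and blow up the running time exponentially in $\Time{s}$. The heap-based strategy instead maintains a closure/environment representation with structure sharing (an abstract machine in the style of existing eager-language compilers), which simulates the whole reduction in time $\poly{\Time{s}}$, but the pointers into the heap cost $\Theta(\log n)$ bits each, so the space may exceed $\mathcal{S}(n)$ by a logarithmic factor.

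The decisive step is then the interleaving: run the heap-based machine, but monitor the current space usage, and whenever the sharing-based representation would grow beyond a constant factor of the current term size (i.e.\ whenever sharing stops paying off relative to the naive bound), unfold the shared structure back into an explicit term and resume. Intuitively, sharing can only cause a blow-up when the same subterm is referenced many times, but in those regimes the explicit term is already large, so unfolding does not increase space beyond $\bigO{\mathcal{S}(n)}$; conversely, the unfoldings are infrequent enough that the total time stays polynomial in $\Time{s}$. The hard part will be to make this trade-off quantitative: one must define a precise invariant relating the heap size, the number of live pointers, and the size of the term the configuration decodes to, and show simultaneously that (i) the interleaved machine never uses more than $\bigO{\mathcal{S}(n)}$ space, and (ii) the amortised time per $\beta$-step remains polynomial despite the periodic unfolding. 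Establishing this two-sided bound — constant-factor space and polynomial time at once, where each pure strategy fails on one side — is where the real work lies, and it requires carefully charging the cost of each unfolding against the space that the naive bound already guarantees is available.
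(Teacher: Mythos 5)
Your setup matches the paper's: the same two strategies (naive substitution vs.\ heap/closure sharing), the same diagnosis of why each fails alone, and a standard configuration-encoding for the converse direction (the paper instead just cites the Accattoli--Dal Lago encoding and checks it is also space-linear, but your sketch would serve). The problem is the combination step, which you explicitly defer as ``where the real work lies'' --- and the specific scheme you propose has a concrete difficulty that the paper's scheme is designed to avoid. You want to run the heap machine and trigger an unfold ``whenever the sharing-based representation would grow beyond a constant factor of the current term size.'' But the current term exists only implicitly, as the unfolding of the machine state; its size can be exponentially larger than the heap representation (that is the whole point of sharing), so the machine cannot evaluate your trigger condition without either materialising the term (destroying the space bound in the size-explosion regime) or traversing the shared structure to compute the size (costing time proportional to the unfolded size, destroying the time bound if done at every step). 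On top of that, ``unfold and resume'' needs a proof that a mid-run heap state decodes to a genuine intermediate term $s_i$, and an amortised accounting of the repeated unfoldings, none of which is supplied.

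The paper's interleaving sidesteps all of this. It is an outer loop over a step budget $k=0,1,2,\dots$ (kept in binary, which fits in space $\bigO{\Space{s}}$ only because of the separate fact $\log\Time{s}\in\bigO{\Space{s}}$, proved by counting terms below a given size --- a lemma your plan would also need). For each $k$ it first computes an \emph{a priori} space budget $m:=\size{s}\cdot p(k)$, where $p$ is the fixed polynomial bounding the heap machine's state size after $k$ steps (a bound depending only on $k$ and $\size{s}$, not on intermediate term sizes). It then runs the \emph{substitution} machine for $k$ steps, aborting as soon as its state would exceed $m$. Since the substitution machine's state size is within a factor $2$ of the current term size, an abort certifies $\Space{s}\geq m/2$; only then is the heap machine run for $k$ steps, and its space $\bigO{m}$ is now guaranteed to be $\bigO{\Space{s}}$. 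No term size is ever computed from a shared representation, no unfolding happens mid-run (a single unfolding of the final closure--heap pair extracts the constant-size answer at the end), and the time analysis is a simple sum over $k\leq 4\Time{s}+2$ of per-iteration polynomial costs rather than an amortised argument. I would recommend replacing your dynamic-switching combiner with this budget-and-retry loop, or else filling the gap by showing how your trigger can be evaluated within the stated resource bounds.
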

The conditions $\mathcal T\in \Omega(n)$ and $\mathcal S \in \Omega(n)$ state that we do not consider sublinear time and space. Furthermore, the restriction of $f$ to $\{\True,\False\}$ can be seen as a restriction to characteristic functions, which is sufficient for the complexity theory of decidability problems.

To the best of our knowledge this is the first proof of the invariance thesis including time \textit{and space} for a fragment of the $\lambda$-calculus using the natural complexity measures. %

\newcommand\expl{s_E}
At this point the reader might have the following objection:
A well-known problem in the $\lambda$-calculus is the issue of size explosion.
There exist terms that reduce to a term of size $\Omega(2^n)$ with only $\bigO{n}$ beta reductions.
Let us adapt an example from~\cite{DalLagoMartini08}:
Given a natural number $n$, we write $\overline n$ for its Church-encoding, defined as 
\[\overline n := \lambda f x.(\underbrace{f(f(f\cdots (f}_{n \text{ times}} x)\cdots)))\]
and define the Church encoding of the Boolean $\True$ as $\overline{\True} := \lambda xy.x$.
Next we define the term $\expl := \lambda x. \overline \True ~ \overline \True ~ (x\overline{2} (\lambda x. x))$.
Note that the application $\overline n \, \overline 2$ reduces to a normal form~$t_n$ of size $\Omega(2^n)$, extensionally equivalent to the Church-exponentiation $\overline{2^n}$.
The term~$\expl$ thus encodes a function, computing an exponentially big intermediate result, discarding it and returning $\overline \True$.
Formally, we have
\[\expl ~ \bar{n} %
  \red^4 (\lambda y.\overline \True) ~ (\underbrace{\overline 2 (\overline 2 \dots (\overline 2}_{n \text{ times}} (\lambda x.x)))) \underbrace{\red \dots \red}_{\bigO{n}\text{ times}} (\lambda y. \overline \True) ~ t_n \red \overline{\True} \]
for a term $t_n$ with $\size{t_n} \in \Omega(2^n)$.
Now $\Time{\expl ~ \bar{n}} \in \Theta(n)$, i.e., $\expl ~ \bar{n}$ reduces to $\bar{\True}$ in about $n$ beta reductions.
Moreover, $\Space{\expl ~ \bar{n}} \geq \size{(\lambda y.\overline \True)~t_n} \in \Omega(2^n)$, i.e., the largest term in the reduction is of exponential size.
While it might seem counterintuitive that a reasonable machine allows a computation that requires much more space than time, which is impossible for Turing machines, we consider this as one of the major insights of this work:
such computations returning values of bounded size can always be optimised, i.e. the size explosion is unnecessary to compute the result of the term.
Of course, any reduction sequence that \emph{ends} in a term of exponential size cannot be simulated in less space if the result has to be written down explicitly.
However, in complexity theory of decision problems, the functions that matter in the end are characteristic functions which map to $\True$ and $\False$.
We make this more precise:
For any size-exploding term $\expl$, we can use \Cref{thm:intro_main} twice to obtain a term $\widehat \expl$ computing the same function with polynomial space usage.
By \Cref{thm:intro_main} (1) there is a Turing machine that on an encoding of $\bar{n}$ simulates $\expl ~ \bar{n}$ with time and space complexity of $\bigO{n^c}$ (the latter since Turing machines can not use more space than time).
By \Cref{thm:intro_main} (2), there is a term $\widehat \expl$ s.t. $\widehat \expl ~ \overline n$ has the same normal form as $\expl ~ \overline n$ -- but with space complexity $\bigO{n^c}$, since the overhead in space is constant-factor.

In~\cite{accattoli2018efficiency}, Accattoli writes
\enquote{Essentially one is assuming that space in the $\lambda$-Calculus is given by the maximum size of terms during evaluation, and since in sequential models time is greater or equal to space (because one needs a unit of time to use a unit of space), the time cost of the size exploding family must be at least exponential. The wrong hypothesis
then is that space is the maximum size of the terms during evaluation. It is not
yet clear what accounts for space in the $\lambda$-Calculus, however the study of time cost models made clear that space is not the size of the term.}
Our result implies that this conclusion does not apply for weak call-by-value evaluation in $\L$.
In this particular case, the wrong hypothesis is that even when simulating on Turing machines, where time is greater or equal to space, the space-measure of a $\lambda$-calculus term does not have to coincide with the actual resources used when simulating it and can thus be much larger than the time-measure. Furthermore, we point out that there exist reasonable sequential models that might consume asympotically more space than time as illustrated in Appendix~\ref{sec:sveb} in case of RAM machines.

\subparagraph*{Simulation Strategies}
In the previous paragraph we argued that our proposed cost and time measures for $\L$ are not inherently contradictory with respect to the invariance thesis. However, we did not provide explicit information of the simulations yet, which we are going to catch up on now.
Note that a simulation of Turing machines in $\L$ for the second part of Theorem~\ref{thm:intro_main} regarding time has already been given by Accattoli and Dal Lago~\cite{ADL,lago_encoding_2017}.
We argue in \Cref{sec:TM_in_L} that their construction also only has a constant factor overhead in space and thus works for our purposes as well.
The main part of the paper thus focuses on the simulation of $\L$ by Turing machines.
Essentially, we rely on an interleaving of two different strategies for simulating a reduction in $\L$ with a Turing machine.
Both strategies are formally introduced in Section~\ref{sec:abstract_machines}; in what follows, we provide an intuitive overview.

The first one, which we call the \emph{substitution-based strategy} %
simulates a reduction sequence naively as given by the reduction rules of $\L$.
In particular, all substitutions are executed immediately if a $\beta$-reduction is performed.
However, we have already seen an example which shows this strategy to be insufficient: Consider again the term $\expl~\bar{n}$ which reduces to $\bar{\True}$ in $\Theta(n)$ beta reductions. If this reduction sequence is simulated naively, exponentially many substitutions have to be performed, and hence the time consumption of that machine would be exponential in $n$.
At the same time, for this term, the strategy is valid if we would only care for space, because the space complexity of the \L term is already exponential.
In general, we show that any reduction sequence in $\L$ can be simulated with only a constant overhead in space by a Turing machine using the substitution-based strategy.

Solving the issue regarding the time consumption requires us to rely on the second simulation strategy which we call the \emph{heap-based strategy}. Intuitively, we do not execute any substitution if a $\beta$-reduction is simulated. Instead we use \emph{closures} and keep track of the values assigned to variables in an environment. These environments are stored on an explicit heap containing pointers and terms. This allows for \emph{structure sharing}, similar to real-world execution of functional languages as well as to the strategy used in~\cite{indeed}.
Indeed, applying this strategy to the reduction sequence $\expl~\overline{n} \red \dots \red \bar{\True}$ yields a polynomial number of steps in a simulation with a Turing machine.

At this point, one might be tempted to think that the heap-based strategy is strictly superior to the substitution-based strategy. However, there is one (major) catch: There exist reduction sequences of time and space linear in the input term size $n$, which yield an overhead of factor $\log n$ in space when simulated using the heap-based strategy. The reason is that the number of heap entries is linear, which requires the pointers, i.e.\ the heap addresses, to grow in size. 
The following example illustrates this phenomenon:
Let $\text{N} := (\lambda{xy}.{xx})\overline\True$.
\newcommand\pexp{s_P}
\begin{align*}
\pexp := \underbrace{\text{N}(\cdots (\text{N}}_{n \text{ times}} \overline\True)\ldots) \succ^{n}\underbrace{{(\lambda{y}.{\overline\True \, \overline\True)}}(\cdots ((\lambda{y}.{\overline\True \,  \overline\True})}_{n \text{ times}}  \overline\True)\ldots) \succ^{2n} \overline\True 
\end{align*}
Since $\pexp$ performs $3n$ beta reductions it needs $3n$ entries on the heap.
The heap pointers then make the space consumption ``explode'' again.
They are of size $\log n$ if binary numbers are used and $n$ if unary numbers are used, resulting in an overall space consumption of $\Omega(n \log n)$ or $\Omega(n^2)$, both forming more than constant factor overhead.
We call this problem \emph{pointer explosion}, analogous to the discussed size explosion problem, and point out that both phenomena have already been identified and discussed by Slot and van Emde Boas~\cite{Slot:1984:TVC:800057.808705} in their treatment of RAM simulations by Turing machines.

In our case of the weak call-by-value $\lambda$-calculus $\L$, we have obtained two simulation strategies, each solving \emph{one} of the problems: The substitution-based strategy works for space, but is insufficient for time on terms exhibiting size explosion (i.e.\ which have exponentially big intermediate terms).
The heap-based strategy works for time, but is insufficient for space on terms exhibiting pointer explosion.
The crucial observation is now that on terms exhibiting size explosion, i.e.\ reaching a term of size $\Omega(2^n)$ in $n$ steps, pointer explosion is a non-issue:
$n$ pointers of size $\log n$ can easily be accommodated for in space~$\bigO{2^n}$.

Since it is a-priori not decidable whether a term exhibits size explosion or pointer explosion, we interleave the execution of the two simulation strategies.
We simulate the execution for every step number $k$ repeatedly, and always try to run the subsitution-based strategy first.
If the size of intermediate terms becomes big enough to accommodate exploding pointers, we immediately abort and try the heap-based strategy for $k$ steps instead.
The heap-based strategy is thus guaranteed to not encounter the pointer explosion problem and the substitution-based strategy can not encounter the size explosion problem, because it is aborted beforehand.
The details of this interleaving machine, which we consider as our main technical contribution, are given in~\Cref{sec:L_in_TM}.

\subparagraph{Formalisation in Coq}

A technically demanding and error-prone part of our proof is analysing the exact complexity of the abstract machines involved (see \Cref{sec:abstract_machines}).

Because our proofs rely on many simulation notions containing hard-to-check side conditions, we provide a formalisation of all results for the abstract machines, i.e.\ every theorem and definition needed for and including \Cref{sec:abstract_machines}, in the proof assistant Coq~\cite{Coq}.\!\footnote{The code is hyperlinked with the PDF version of this document and can be accessed at \url{https://ps.uni-saarland.de/extras/wcbv-reasonable} or by clicking on the formalised statements and definitions, which are marked with a \includegraphics[height=1em]{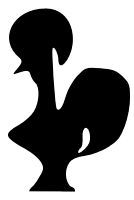}-symbol.}

A formalisation of the full results, including a formal verification of the Turing machines involved, is an ongoing and challenging project.
We reported on the project previously in~\cite{LOLA}; the current paper presents the finalised theoretical contributions.

\section{Preliminaries}\label{sec:prelims}

We adopt a notation closely related to type-theory, but the paper can be read with no background in type theory.
All defined functions are always total.
We use the type $\opt{X}$ to denote the type $X$ enriched with a new element $\bot$.
This allows us to view $X \to \opt{Y}$ as the type of partial functions from $X$ to $Y$.
In the definition of such partial functions, left out cases are meant to default to $\bot$.
Concerning lists $A,B:\List{X}$ over $X$, we use $[]$ for the empty list, write $x::A$ to prepend an element to a list and write $[x_1,\ldots,x_k]$ for a list built from the elements $x_i$. We write $A\con B :\List{X}$ for list concatenation%
, $\length{A}:\nat$ for length
and $A[n]:\opt{X}$ for list lookup.

\subsection{Call-by-value $\lambda$-calculus \L}
\setCoqFilename{LM.L}

The call-by-value $\lambda$-calculus introduced by Plotkin~\cite{Plotkin75} in his seminal paper is known to be a reasonable machine for time complexity~\cite{DalLagoMartini08}.
In those works, abstractions and variables are treated as values, but $\beta$-reduction below binders is not allowed, i.e. reduction is \textit{weak}.
We use a deterministic version of the weak call-by-value $\lambda$-calculus we call \L, originally introduced in~\cite{Forster17}.
We treat only abstractions as values but keep the weak behaviour of reduction.
On closed terms, the number of steps to a normal form agrees with the number of steps needed in the version in~\cite{DalLagoMartini08}.
We keep the definitions short and use the same notations as in~\cite{cbvlcm2}, where more details can be found.

We define the syntax of the $\lambda$-calculus using a de Bruijn representation of terms~\cite{deBruijn1972}: $s,t,u,v ~:~\Ter~::=~ n\mid st\mid\lambda s$ where $n:\nat$.
\begin{definition}[][subst]
We define a recursive function \emph{$\subst sku$}
providing a single-point, capturing \emph{substitution} operation:
\begin{align*}
  \subst kku
  &~:=~u
  &
  \subst nku
  &~:=~n
  \quad\quad\quad\text{if}~n\neq k
  \\
  \subst{(st)}ku
  &~:=~(\subst sku)(\subst tku)
  &
  \subst{(\lambda s)}ku
  &~:=~\lambda(\subst s{\natS k}u)
\end{align*}
\end{definition}
We say that $s$ is bounded by $k$ if all free de Bruijn indices in $s$ are lower than $k$.
Consequently, $s$ is a closed term iff it is bounded by $0$.

\begin{definition}[][step]
  We define a deterministic inductive 
\emph{reduction relation $s\red t$}, which is weak, call-by-value and agreeing with the reduction relation in~\cite{Plotkin75, DalLagoMartini08} on closed terms:
\begin{mathpar}
  \inferrule*{~}{(\lambda s) (\lambda t)\red\subst s0{\lambda t}}
  \and
  \inferrule*{s\red s'} {st\red s't}
  \and
  \inferrule* {t\red t'} {(\lambda s)t\red(\lambda s)t'}
\end{mathpar}
\end{definition}

Note that the only closed, irreducible terms are abstractions.
We write $\TimeBS{s}{k}{t}$ and $\SpaceBS{s}{m}{t}$ if $s \red^k t$ for $t$ being an abstraction and $m = \Space{s}$ as defined in \Cref{def:measures}.

The size of a term is defined with a unary encoding of indices in mind:%
\begin{align*}
  \size{n}&~:=~1+n &
  \size{\lambda s}&~:=~1 + \size{s} &
  \size{st}&~:=~1 + \size{s} + \size{t}
\end{align*}
For a binary encoding, i.e.\,$\size{n}~:=~1+\log_2 n$, we conjecture that the remainder of this paper can be adapted with no essential change.

\subsection{Encoding Terms as Programs}
\label{sec:programs}
\setCoqFilename{LM.Programs}

Turing machines can not directly operate on tree-like data structures like $\L$-terms.
We encode terms as \emph{programs} $P,Q,R
~:~\N{\Pro}$, which are lists of \emph{commands}:
\[c~:~\N{\Com} ~::=~  \ret\mid \var\,n \mid  \lamb \mid  \app \hspace{3em} \text{ with }n:\nat\] 
  \begin{definition}[][compile]\label{gamma}
    The encoding function $\gamma : \Ter \to \Pro$ compiles terms to programs:
\begin{mathpar}
  \gamma n~:=~[\var\,n] \and
  \gamma(st)~:=~\gamma s \con \gamma t \con [\app]  \and
  \gamma(\lambda s)~:=~\lamb :: \gamma s \con [\ret]
\end{mathpar}
\end{definition}
This encoding is similar to postfix notation, the additional command $\lamb$ makes it easier to detect subprograms representing values when traversing the encoding.

\begin{definition}[][reprP]
  We write $P \gg s$, read as \emph{$P$ represents $s$}, to connect programs with values in~\L.
  This relation is defined with the single rule $\inferrule*{}{\gamma t \gg \lambda t}$.
\end{definition}

\begin{technicalappendix}{}
  \setCoqFilename{LM.Programs}
\begin{lemma}[][compile_inj]\label{gamma-inj}
  $\gamma$ is injective.
\end{lemma}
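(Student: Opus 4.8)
The plan is to prove $\gamma s = \gamma t \implies s = t$ by structural induction on $s$ with a case analysis on the shape of $t$. First I would record that the three syntactic constructors are already distinguished by their encodings: $\gamma n = [\var\,n]$ is the only image of length $1$, while every application ends in $\app$ and every abstraction ends in $\ret$. Hence $\gamma s = \gamma t$ forces $s$ and $t$ to share a head constructor, and the variable case ($[\var\,n] = [\var\,m] \implies n = m$) and the abstraction case (strip the leading $\lamb$ and trailing $\ret$, then apply the induction hypothesis) are immediate. The only genuine difficulty is the application case: from $\gamma(s_1 s_2) = \gamma(t_1 t_2)$ we obtain $\gamma s_1 \con \gamma s_2 = \gamma t_1 \con \gamma t_2$, and to invoke the induction hypothesis I must first show that this equality of concatenations forces $\gamma s_1 = \gamma t_1$ and $\gamma s_2 = \gamma t_2$. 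I expect this unique-decomposition step to be the main obstacle, because $\gamma$ is \emph{not} prefix-free (for instance $\gamma n$ is a proper prefix of $\gamma(n\,m)$), so one cannot simply compare lengths and peel off a prefix.

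To resolve it I would introduce two additive statistics on programs, both homomorphisms for $\con$: a value counter $w(P)$ assigning $+1$ to each $\var\,n$, $-1$ to each $\app$, and $0$ to $\lamb$ and $\ret$; and a nesting depth $d(P)$ assigning $+1$ to $\lamb$, $-1$ to $\ret$, and $0$ otherwise. A one-line induction on terms gives $w(\gamma s) = 1$ and $d(\gamma s) = 0$ for every $s$, and a routine induction shows that every prefix $R$ of $\gamma s$ satisfies $d(R) \ge 0$ (the $\lamb/\ret$ brackets are well nested). The key lemma is then: every nonempty prefix $R$ of $\gamma s$ with $d(R) = 0$ satisfies $w(R) \ge 1$; intuitively, once reading has started, at top bracket level at least one complete value has always been produced. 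This is proved by induction on $s$: in the abstraction case every nonempty proper prefix has $d \ge 1$, so the hypothesis $d(R)=0$ leaves only the full program, and in the application case the splitting $\gamma(s_1 s_2) = \gamma s_1 \con \gamma s_2 \con [\app]$ lets one add up the $w$-values of the parts using the induction hypotheses on $s_1$ and $s_2$.

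With this lemma the unique-decomposition step falls out. Given $\gamma s_1 \con \gamma s_2 = \gamma t_1 \con \gamma t_2$, assume without loss of generality $\length{\gamma s_1} \le \length{\gamma t_1}$, so $\gamma t_1 = \gamma s_1 \con R$ and $\gamma s_2 = R \con \gamma t_2$ for some $R$. Additivity together with $w(\gamma s_1) = w(\gamma t_1) = 1$ and $d(\gamma s_1) = d(\gamma t_1) = 0$ yields $w(R) = 0$ and $d(R) = 0$. If $R$ were nonempty it would be a nonempty prefix of the valid encoding $\gamma s_2$ with $d(R) = 0$ and $w(R) = 0 < 1$, contradicting the key lemma; hence $R = []$, giving $\gamma s_1 = \gamma t_1$ and $\gamma s_2 = \gamma t_2$. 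Feeding these into the induction hypothesis closes the application case and completes the proof. The whole argument is elementary once the two balance invariants are in place; the point I would be most careful about is that $w$ must leave the $\lamb/\ret$ brackets invisible while $d$ tracks them, so that the level $d = 0$ is exactly the level at which complete values are counted.
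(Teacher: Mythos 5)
Your proof is correct, but it takes a genuinely different route from the paper. The paper defines an explicit stack-based decoder $\delta : \nat \to \Pro \to \List{\Ter} \to \opt{\List \Ter}$ (reading the program left to right, pushing variables, popping two entries at $\app$, tracking $\lamb/\ret$ depth) and proves the single generalised equation $\delta\, k\, (\gamma s \con P)\, A = \delta\, k\, P\, (s::A)$ by induction on $s$; injectivity is then immediate because $\gamma$ has a left inverse. You instead prove unique readability of the postfix encoding directly, via the classical balance-invariant argument: the value counter $w$ and bracket depth $d$ are homomorphic for $\con$, every encoding has $w = 1$ and $d = 0$, and no proper nonempty prefix at depth $0$ can have $w \le 0$, which forces the decomposition $\gamma s_1 \con \gamma s_2 = \gamma t_1 \con \gamma t_2$ to align. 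Your key lemma and its supporting invariants check out (including the subtle point that $\gamma$ is not prefix-free, so a length comparison alone does not suffice, and the use of $d(R) \ge 0$ for prefixes inside the abstraction case). The trade-off: the paper's inverse-function proof is a single induction and doubles as a correctness statement for an actual decoding algorithm, whereas your argument needs a couple more lemmas but is self-contained, requires no auxiliary function, and makes explicit \emph{why} the code is uniquely decodable. Either would serve the paper's purposes; only the injectivity of $\gamma$ is used downstream (in Theorem~\ref{space-bounds-time}).
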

\begin{mproof}{gamma-inj}
  We define an inverse $\delta: \nat \to \Pro \to \List \Ter \to \opt{\List \Ter}$ of $\gamma$:
  \begin{align*}
    \delta k (\var\,n::P) A&~:=~\delta k P (n::P) &
    \delta k (\app::P) (t::s::A)&~:=~\delta k P (st::A)\\
    \delta k (\lamb::P) A&~:=~\delta (\natS k) P A&
    \delta (\natS k) (\ret::P) (s::A)&~:=~\delta k P (\lam s::A)\\
    \delta k [] A&~:=~A &
  \end{align*}
  Now $\delta k (\gamma s\con P) A=\delta k P (s::A)$ holds by induction on $s$.
\end{mproof}
\end{technicalappendix}

To store the encoding of de Bruijn indices on tapes, we will use a unary encoding, motivating the following definition of the size of commands and programs:
\begin{mathpar}
  \size{\var\,n}~:=~1 + n \and
  \size{c}~:=~1\quad \text{if $c$ is no variable} \and
  \size{P}~:=~1 + \sum_{c \in P} \size{c}
\end{mathpar}
This size is compatible with term size, with factor $2$ due to the two commands for abstractions:
\begin{lemma}[][size_geq_1] \label{size-gamma}
  $1 \leq \size{s} \leq \size{\gamma s} \leq 2\size{s} - 1$
\end{lemma}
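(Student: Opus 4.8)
The plan is to establish all three inequalities by structural induction on $s : \Ter$, driven by the recursive clauses of $\gamma$ in \Cref{gamma}. The one auxiliary fact I need first is how program size behaves under the concatenation $\con$ that $\gamma$ uses: from the defining equation for $\size{P}$ one reads off that $\size{\cdot}$ is additive over $\con$ up to a fixed book-keeping constant, so $\gamma$ unfolds into three clean size recurrences. In the base case $\size{\gamma n} = \size{\var\,n} = 1 + n = \size n$; for applications $\size{\gamma(st)} = \size{\gamma s} + \size{\gamma t} + 1$, the single extra command being $\app$; and for abstractions $\size{\gamma(\lambda s)} = \size{\gamma s} + 2$, the two extra commands being $\lamb$ and $\ret$. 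Together with the term-size recurrences $\size n = 1 + n$, $\size{st} = 1 + \size s + \size t$ and $\size{\lambda s} = 1 + \size s$, these identities reduce the whole lemma to elementary arithmetic.

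With the recurrences in hand, each of the three inequalities is a separate short induction. The leftmost bound $1 \le \size s$ is immediate, since $\size n = 1 + n \ge 1$ and the application and abstraction clauses only add to the size. For the middle bound $\size s \le \size{\gamma s}$ the base case is an equality, the application step follows from $\size{st} = \size s + \size t + 1 \le \size{\gamma s} + \size{\gamma t} + 1 = \size{\gamma(st)}$ using the two induction hypotheses, and the abstraction step from $\size{\lambda s} = \size s + 1 \le \size{\gamma s} + 2 = \size{\gamma(\lambda s)}$. For the rightmost bound $\size{\gamma s} \le 2\size s - 1$ the base case reduces to $1 + n \le 2n + 1$, the application step carries slack to spare, and the abstraction step is the tight one, giving exactly $\size{\gamma(\lambda s)} = \size{\gamma s} + 2 \le (2\size s - 1) + 2 = 2\size{\lambda s} - 1$.

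I expect the only genuine work to be the program-size book-keeping of the first paragraph, namely pinning down the exact additive constants contributed by $\con$ and by the commands $\app$, $\lamb$ and $\ret$; everything afterwards is routine. The delicate case is the abstraction clause of the upper bound, since it is the unique place where the inequality is tight: the factor $2$ is paid precisely because each abstraction node of the term compiles to the two commands $\lamb$ and $\ret$, and the bound is attained exactly on the nested abstractions $\lambda \cdots \lambda\, 0$. Any miscounting of the constant in this clause would break the factor-$2$ estimate, so it is the step to verify most carefully, whereas the application and base cases are forgiving enough to absorb small slack.
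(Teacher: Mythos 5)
Your proof is correct and is essentially the paper's own proof --- the paper literally says ``Induction on $s$'' --- with the three size recurrences for $\gamma$ spelled out; the case analysis and the observation that the abstraction clause is where the factor $2$ is paid match what the induction requires. One constant to pin down, since you yourself flag the additive bookkeeping as the only genuine work: you compute $\size{\gamma n}=\size{\var\,n}=1+n$, which implicitly reads $\size{P}$ as $\sum_{c\in P}\size{c}$, whereas the paper's displayed definition is $\size{P}=1+\sum_{c\in P}\size{c}$; under that literal reading the base case of the rightmost inequality becomes $2+n\le 2n+1$ and fails at $n=0$, so the lemma forces the convention you use (and the variable case at $n=0$ is then tight as well, not only the abstraction clause) --- the off-by-one sits in the paper's stated definition of $\size{P}$ rather than in your argument, but it is exactly the kind of miscount you warn against, so it is worth making the convention explicit.
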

\begin{mproof}{size-gamma}
  Induction on $s$.
\end{mproof}

The use of $\lamb$ and $\ret$ to encode abstraction allows to define a function $\phi P: \opt{(\Pro \times \Pro)}$ that extracts the body of an abstraction by matching $\lamb$ with $\ret$ like parentheses. It uses an auxiliary function $\jumpTarget{k}{Q}{P}$ that stores the number $k$ of currently unmatched $\lamb$ and the prefix $Q$ already processed.
\begin{definition}[][jumpTarget]
  $\phi P ~:=~\jumpTarget{P}{0}{[]}$ with
  \begin{align*}    
  \jumpTarget 0  Q{(\ret::P)}&~:=~\some(Q,P) &
  \jumpTarget{\natS k} Q{(\ret::P)}&~:=~\jumpTarget k {Q \con[\ret]}P\\
  \jumpTarget k Q {(\lamb::P)}&~:=~\jumpTarget {\natS k} {Q \con[\lamb]}P &
  \jumpTarget k Q {(c::P)}&~:=~\jumpTarget k {Q\con[c]} P ~~\text{\normalfont if $c=\var\, n$ or $\app$}
\end{align*}
\end{definition}

\begin{lemma}[][jumpTarget_correct]\label{phi-gamma} %
  $\phi(\gamma s \con \ret :: P) = \some (\gamma s,P)$
\end{lemma}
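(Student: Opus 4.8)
The plan is to prove the statement by first establishing a suitably generalized claim about $\jumpTarget{}{}{}$ and then specializing it. The equation as stated cannot be attacked directly by induction on $s$, because when $\gamma s$ sits at the head of a larger program the relevant invariant concerns the intermediate values of the unmatched-$\lamb$ counter and of the accumulator, neither of which is visible in the statement for $\phi$. Unfolding the definition, $\phi(\gamma s \con \ret :: P)$ is the call of $\jumpTarget{}{}{}$ on the program $\gamma s \con \ret :: P$ with counter $0$ and empty accumulator $[]$. I would therefore prove the following strengthening, quantified over all $k:\nat$, $Q:\Pro$ and $P:\Pro$:
\[ \jumpTarget{k}{Q}{\gamma s \con P} = \jumpTarget{k}{Q \con \gamma s}{P}. \]
Intuitively this says that scanning across a compiled term $\gamma s$ is \emph{balanced}: it returns the counter to its original value $k$ and merely transfers the block $\gamma s$ from the remaining program into the accumulator, never triggering the terminating rule in between.

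I would prove this generalization by induction on $s$, keeping $k$, $Q$ and $P$ universally quantified so that the induction hypotheses may be instantiated at shifted counters and extended accumulators. For $s = n$ the program $\gamma n = [\var\,n]$ is consumed by a single application of the scanning rule for variables, leaving the counter unchanged. For $s = tu$ we have $\gamma(tu) = \gamma t \con \gamma u \con [\app]$; two applications of the induction hypothesis (first for $t$, then for $u$, each with the correspondingly extended accumulator) followed by the rule for $\app$ give the claim. The only delicate case is $s = \lambda t$, where $\gamma(\lambda t) = \lamb :: \gamma t \con [\ret]$: the leading $\lamb$ raises the counter to $\natS k$, the induction hypothesis applied at counter $\natS k$ scans $\gamma t$ and returns the counter to $\natS k$, and the trailing $\ret$ is then matched by the \emph{second} defining rule — precisely because the counter is positive — which brings it back down to $k$ while appending $\ret$ to the accumulator. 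This is exactly where balancedness is used: the terminating rule (counter $0$, head $\ret$) is never reached inside $\gamma s$, because every $\ret$ of $\gamma s$ is enclosed by a matching $\lamb$.

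Finally I would specialize the generalization to $k = 0$, $Q = []$ and tail $\ret :: P$, obtaining $\jumpTarget{0}{[]}{\gamma s \con \ret :: P} = \jumpTarget{0}{\gamma s}{\ret :: P}$ (using $[] \con \gamma s = \gamma s$). Now the counter is $0$ and the head of the remaining program is $\ret$, so the first defining rule fires and returns $\some(\gamma s, P)$, which is the desired equality for $\phi$. The main obstacle is identifying and correctly formulating the generalization — in particular keeping $k$ and $Q$ universally quantified and getting the off-by-one counter bookkeeping right in the abstraction case; once the strengthened statement is in place, every case reduces to a routine computation from the defining equations.
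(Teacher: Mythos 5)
Your proposal is correct and matches the paper's own proof: the paper establishes exactly the same generalisation, $\phi_{k,Q}(\gamma s \con P) = \phi_{k,Q\con \gamma s}(P)$, by induction on $s$ and then specialises to $k=0$, $Q=[]$. Your case analysis, including the counter bookkeeping in the abstraction case, fills in the routine details the paper leaves implicit.
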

\begin{mproof}{phi-gamma}
The generalisation $\phi_{k,Q}{(\gamma s \con P)} = \phi_{k,Q\con \gamma s} P$ follows by induction on $s$.
\end{mproof}

We define a substitution operation $\subst P k Q$ on programs, analogous to substitution on terms%
\begin{onlyicalpversion}%
  in \Cref{substP}%
\end{onlyicalpversion}.
\begin{onlyfullversion}{}
  \begin{definition}[Substitution on programs][substP]\label{substP} 
\begin{align*}
  \subst{(\var\,k::P)} k Q &~:=~  Q::\subst P k Q & 
  \subst{(\var\,n::P)} k Q &~:=~ \var\,n::\subst P k Q \\
  \subst{(\lamb::P)} k Q &~:=~ \lamb::\subst P {S k} Q&
  \subst{(\app::P)} k Q &~:=~ \app::\subst P k Q \\
  \subst{(\ret::P)} 0 Q &~:=~ [\ret] &
  \subst{(\ret::P)} {S k} Q &~:=~ \ret::\subst P k Q \\
  \subst{[]} k Q &~:=~ []
\end{align*}
\end{definition}
\end{onlyfullversion}
 Term and program substitution are compatible:
\begin{lemma}[][substP_correct]\label{subst-gamma} %
  $\subst {(\gamma s)}{0}{\gamma t} = \gamma (\subst s0 t)$
\end{lemma}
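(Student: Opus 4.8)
The plan is to strengthen the statement so that it becomes amenable to induction on $s$. As given, the equation fixes the level at $0$ and leaves nothing to the right of $\gamma s$, but neither the abstraction case (which increases the de Bruijn level) nor the application case (which splices $\gamma s$ into a larger command stream) can be closed in that restricted form. I would therefore prove the generalisation that for all terms $s,t$, all $k:\nat$ and all programs $P:\Pro$,
\[
  \subst{(\gamma s \con P)}{k}{\gamma t} ~=~ \gamma(\subst s k t) \con \subst{P}{k}{\gamma t}\,.
\]
The lemma is then the instance $P = []$, $k = 0$, using $\gamma s \con [] = \gamma s$ and $\subst{[]}{0}{\gamma t} = []$. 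This is the same generalise-then-induct pattern already used for \Cref{phi-gamma}.

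The proof then goes by induction on $s$, with $k$, $P$ (and $t$) arbitrary. For $s = n$ I split on whether $n = k$, recalling $\gamma n = [\var\,n]$: the two clauses of program substitution for $\var\,k$ and $\var\,n$ match exactly the two term-substitution clauses $\subst k k t = t$ and $\subst n k t = n$. For $s = s_1 s_2$ I rewrite $\gamma(s_1 s_2)\con P = \gamma s_1 \con (\gamma s_2 \con \app :: P)$ and apply the induction hypothesis twice, first to $s_1$ with trailing program $\gamma s_2 \con \app :: P$ and then to $s_2$ with trailing program $\app :: P$; the $\app$ command is carried through unchanged by its own program-substitution clause. For $s = \lambda s_1$ I use $\gamma(\lambda s_1) \con P = \lamb :: (\gamma s_1 \con \ret :: P)$: the $\lamb$ clause bumps the level to $\natS k$, the induction hypothesis applied to $s_1$ at level $\natS k$ with trailing program $\ret :: P$ handles the body, and the closing $\ret$ is consumed by the $\natS k$-clause for $\ret$, yielding $\ret :: \subst{P}{k}{\gamma t}$. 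Reassociating reassembles this precisely into $\gamma(\lambda(\subst{s_1}{\natS k}{t}))\con \subst{P}{k}{\gamma t} = \gamma(\subst{(\lambda s_1)}{k}{t}) \con \subst{P}{k}{\gamma t}$, as required.

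I expect the only real subtlety to lie in choosing the generalisation rather than in the case analysis, and in particular in the interplay of the two $\ret$-clauses. The key observation is that within $\gamma s$ every $\ret$ is matched by an earlier $\lamb$, so processing $\gamma s$ starting at level $k$ returns to level $k$ and never falls below it; consequently the truncating clause $\subst{(\ret::P)}{0}{\gamma t} = [\ret]$ is never triggered by a $\ret$ belonging to $s$ itself, but only by a stray $\ret$ in the trailing program $P$. This is exactly what lets the equation keep $P$ free and hold uniformly, and it is the one place where I would track the level $k$ carefully through the abstraction case.
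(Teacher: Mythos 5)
Your proposal is correct and follows exactly the paper's proof: the same generalisation $\subst{(\gamma s \con P)}{k}{\gamma t} = \gamma(\subst{s}{k}{t}) \con \subst{P}{k}{\gamma t}$ established by induction on $s$, with the lemma as the instance $k=0$, $P=[]$. Your additional observation that the truncating $\ret$-clause at level $0$ is never triggered inside $\gamma s$ (because every $\ret$ from $s$ is processed at level at least $\natS k$) is a useful elaboration of why the induction goes through, but it is the same argument.
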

\begin{mproof}{subst-gamma}
  The generalisation $\subst {(\gamma s \con P)}{k}{\gamma t} = \gamma (\subst skt) \con \subst P k {\gamma t}$ holds by induction on s.
\end{mproof}

\subsection{Closures and Heaps}\label{sec:closures-heap}

\setCoqFilename{LM.AbstractHeapMachine}

To allow for structure sharing later, we introduce closures whose environments are stored in an explicitly modelled heap.
Environments are stored as linked lists of closures on the heap, and closures $g~:~\N{\HC} ~:=~ \Pro \times \HA$ contain programs and pointers $a,b~:~\N{\HA} ~:=~ \nat$ to the environment. We represent the heap $H~:~\N{\Heap}~:=~\List{\HE}$ as a list of its cells $e~:~\N{\HE}~::=~\envCons{g}{a}$ that store the head and the address of the tail.
To interpret the linked list structure, we define a \emph{lookup function} $\N{H[a,n]}:\opt{\HC}$ that returns the $n$-th entry of the list at address $a$, i.e. the value bound to the de Bruijn index $n$ in the environment $a$:
\begin{align*}
  H[a,n]&~:=~\ITE{n=0}{\some{g}}{H[b,n-1]} &&\text{where}~H[a]=\some{\envCons g b}
\end{align*}

The operation $\N{\M{put}~H\,e}:\Heap\times\HA$ puts a heap entry on the heap and returns the new heap and the address of the new element.
\begin{align*}
  \M{put}~H\,e&~:=~(H\con[e],~\natS\,|H|)
\end{align*}
In our setting, we allocate at the end of the heap and have no need for garbage collection.

A closure $(P,a)$ represents some term if the environment $a$ contains bindings for all free variables in $P$. To make this more precise, we first define the unfolding of a term relative to some environment:
\begin{definition}[Unfolding][unfolds] \label{unfolding}
  The unfolding $\inst{H}{k}{s}{a}{s'}$ is inductively given by the rules
\begin{mathpar}
  \inferrule*{n < k}
  {\inst{H}{k} n a n}
 \and
  \inferrule*{n \geq k
    \and H[a,n-k] = \some{(P,b)}
    \and P \gg s
    \and \inst H 0 s b {s'}}
  {\inst{H}{k} n a {s'}}
  \and
  \inferrule*{ \inst{H}{\natS k} s a {s'}}
  { \inst{H}{k} {\lambda s} a {\lambda s'}}
  \and
  \inferrule*{\inst{H}{k} s a {s'}
    \and \inst{H}{k} t a {t'}}
  {\inst{H}{k} {s t} {a}{s' t'}}
\end{mathpar}
\end{definition}
Intuitively, $\inst{H}{k}s a {s'}$ holds if $s'$ is obtained by recursively substituting all free variables in $s$ by their values in the environment $a$. The index $k$ is an artefact of the de Bruijn representation and denotes which variables in $s$ are locally bound during the traversal of $s$.

The first rule states that bound variables are left unchanged. The second rule states that for free variables, the environment $a$ binds $n$ to some value $s'$ that can be looked up in $H$.  The third rule descends under an abstraction and thus one more variable is considered bound in $s$. The last rule descends under application.

\begin{definition}[][representsCl]\label{rep-rel-clos}
  The relation $g \gg_H s$, read as \emph{$g$ represents $s$ relative to $H$}, is defined by the single rule
\begin{mathpar}
  \inferrule*{P \gg t \and \inst{H}{0} t a s}{(P,a) \gg_H s}
\end{mathpar}
\end{definition}
\begin{technicalappendix}{}
Adding a value $t'$ to an environment $a$ results in substitution in the unfolded term:
\setCoqFilename{LM.AbstractHeapMachine}
\begin{lemma}[][unfolds_subst]\label{inst-subst}
  If $H[a'] = \envCons g a$ with $g \gg_H t'$ and $\inst{H}{1} s a {s'}$, then $\inst{H}{0} s {a'} {\subst{s'}{0}{t'}}$.
\end{lemma}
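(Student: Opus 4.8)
The plan is to prove the statement as the $k=0$ instance of a claim generalised over the de Bruijn index carried by the unfolding relation. Concretely, I would fix $H$, $a'$, $a$, $g$ and $t'$ with $H[a'] = \envCons{g}{a}$ and $g \gg_H t'$, and show by structural induction on $s$ (with $k$ and $s'$ universally quantified) that $\inst{H}{\natS k}{s}{a}{s'}$ implies $\inst{H}{k}{s}{a'}{\subst{s'}{k}{t'}}$. Since $\natS 0 = 1$, the lemma is exactly the case $k = 0$. Throughout I rely on the lookup behaviour of the augmented environment: because $H[a'] = \envCons{g}{a}$, we have $H[a',0] = \some g$ and $H[a',\natS m] = H[a,m]$.

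The abstraction and application cases are mechanical, since both unfolding and substitution recurse structurally in lockstep. For $s = \lambda s_0$ the unfolding rule descends with index $\natS(\natS k)$, the induction hypothesis (instantiated at $\natS k$) yields $\inst{H}{\natS k}{s_0}{a'}{\subst{s_0'}{\natS k}{t'}}$, and the abstraction rule together with the clause $\subst{(\lambda s_0')}{k}{t'} = \lambda(\subst{s_0'}{\natS k}{t'})$ closes the case. For $s = s_1 s_2$ the two induction hypotheses combine componentwise, matching the componentwise definition of substitution on applications.

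The delicate case is $s = n$, which I split according to the comparison of $n$ with $\natS k$ and $k$. If $n < k$ the variable is locally bound in both unfoldings, $s' = n$, and $\subst{n}{k}{t'} = n$, so the first rule applies at $a'$. If $n = k$ the variable was bound under the hypothesis (so $s' = n$) but is free in the goal; here $\subst{n}{k}{t'} = t'$, and the second unfolding rule at $a'$ consults $H[a',0] = \some g$, whose premises $P \gg u$ and $\inst{H}{0}{u}{b}{t'}$ are supplied verbatim by the hypothesis $g \gg_H t'$ (\Cref{rep-rel-clos}). If $n > k$ the variable is free in both; the offset merely shifts by one, $H[a', n-k] = H[a, n - \natS k]$, so the same heap cell is consulted and the same value $s'$ is produced, and it remains to see that $\subst{s'}{k}{t'} = s'$.

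This last identity requires the only auxiliary fact of the proof: unfoldings performed relative to index $0$ are closed, so the looked-up value $s'$ (obtained via $\inst{H}{0}{u}{b}{s'}$) contains no free occurrence of $k$ for substitution to act on. I would record this beforehand as a short lemma stating that $\inst{H}{k}{s}{a}{s'}$ implies $s'$ is bounded by $k$, proved by a direct induction on the unfolding derivation of \Cref{unfolding} (the free-variable case appeals to its index-$0$ instance, the abstraction case to its index-$\natS k$ instance). I expect the main---though still routine---obstacle to be the de Bruijn bookkeeping in the variable case and keeping the generalised index $k$ synchronised with the substitution index in $\subst{s'}{k}{t'}$; the remainder is a structural induction tracking the two definitions.
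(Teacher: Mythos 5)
Your proposal is correct and follows essentially the same route as the paper: the same generalisation to an arbitrary index $k$ (proving $\inst{H}{\natS k}{s}{a}{s'}$ implies $\inst{H}{k}{s}{a'}{\subst{s'}{k}{t'}}$), the same three-way case split on the variable, and the same appeal to the closedness of looked-up values (\Cref{inst-bound}) to discharge $\subst{s'}{k}{t'}=s'$ in the $n>k$ case. The only cosmetic difference is that you induct structurally on $s$ with $k,s'$ quantified while the paper inducts on the unfolding derivation; both work here since the non-structural premise of the lookup rule is never recursed into.
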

\begin{mproof}{inst-subst}
  Let $H[a'] = \envCons g a$ and $g \gg_H t'$. We show a generalisation: \emph{If $\inst{H}{\natS k} s a {s'}$, then $\inst{H}{k} s {a'}{\subst{s'}{k}{t'}}$}, by induction on $\inst{H}{\natS k} s a {s'}$.
  
  In the case $s = n < \natS k$ and $s'=n$, there are two subcases: Assuming $n<k$,  $\inst{H}{k} n {a'} n = \subst{s'}{k}{t'}$ holds by definition. Otherwise, we have $n=k$. Since $H[a']= \envCons g a$, we have $H[a',n-k]=H[a',0]=g$. With $P,b$ such that $g=(P,b)$, we have $\inst{H}{k} n {a'}{\subst{n}{k}{t'}}=t'$ by the second rule since $(P,b) \gg_H t'$ implies $(P \gg t)$ and $ (t,b) \gg_H t'$ for some $t$.
  
  In the case $s = n \geq \natS k$, we have $H[a,n-(\natS k)]=(P,b)$ and $P \gg u$ with $\inst{H}{0} u b {s'}$ for some $P,b$. As $H[a'] = \envCons g a$, we have $H[a',n-k]=H[a,n-(\natS k)]=(P,b)$. Therefore $\inst{H}{k} n a {\subst {s'} {k} {t'}} = s'$ by the second rule, where the equality holds as $s'$ is closed by \Cref{inst-bound}.
  
  In the other cases, i.e. application and abstraction, the claim follows by the inductive hypothesis and the definition of $\inst{H}{k} \cdot \cdot \cdot $. 
\end{mproof}
\end{technicalappendix}

\begin{technicalappendix}{}
Unfolding only changes de Bruijn indices starting at $k$:
\begin{lemma}[][bound_unfolds_id]\label{bound-inst}
  If $s$ is bounded by $k$, then $\inst{H}{k}sas$.
\end{lemma}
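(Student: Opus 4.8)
The plan is to prove the statement by structural induction on $s$, keeping the threshold $k$ universally quantified throughout, since the abstraction case forces the induction hypothesis to be used at the shifted threshold $\natS k$. Recall that $s$ is bounded by $k$ precisely when every free de Bruijn index of $s$ lies below $k$; unfolding this predicate along the term structure, a variable $n$ is bounded by $k$ iff $n < k$, an abstraction $\lambda s_0$ is bounded by $k$ iff $s_0$ is bounded by $\natS k$, and an application $s_1 s_2$ is bounded by $k$ iff both $s_1$ and $s_2$ are bounded by $k$.

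First I would dispatch the base case $s = n$: boundedness gives $n < k$, so the first rule of \Cref{unfolding} directly yields $\inst{H}{k}{n}{a}{n}$. Note that the second rule, which inspects the environment, is inapplicable here because it requires $n \geq k$. For the application case $s = s_1 s_2$, boundedness of $s$ by $k$ hands me boundedness of both $s_1$ and $s_2$ by the same $k$, so the induction hypothesis supplies $\inst{H}{k}{s_1}{a}{s_1}$ and $\inst{H}{k}{s_2}{a}{s_2}$, and the fourth rule assembles them into $\inst{H}{k}{s_1 s_2}{a}{s_1 s_2}$. For the abstraction case $s = \lambda s_0$, boundedness by $k$ means $s_0$ is bounded by $\natS k$, so I apply the induction hypothesis at threshold $\natS k$ to obtain $\inst{H}{\natS k}{s_0}{a}{s_0}$, and the third rule delivers $\inst{H}{k}{\lambda s_0}{a}{\lambda s_0}$.

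The only subtlety, and hence a point to be careful about rather than a genuine obstacle, is that the abstraction case changes the threshold from $k$ to $\natS k$; the induction must therefore be set up so that the hypothesis is available for all $k$, which is automatic once we induct on the structure of $s$ (equivalently, on the derivation that $s$ is bounded by $k$) with $k$ generalised. Everything else is a direct application of the corresponding unfolding rule, and in particular no properties of the heap $H$ or the environment $a$ are required.
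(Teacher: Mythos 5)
Your proof is correct and follows the same route as the paper, which simply says \enquote{Induction on $s < k$}, i.e.\ induction on the boundedness derivation (equivalently, structural induction on $s$ with $k$ generalised). You have correctly identified the one point of care --- keeping $k$ quantified so that the abstraction case can instantiate the hypothesis at $\natS k$ --- and each case is discharged by the matching rule of \Cref{unfolding} exactly as intended.
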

\begin{proof}
  Induction on $s < k$.
\end{proof}
In particular, closed terms are invariant under unfolding.

\end{technicalappendix}

\begin{technicalappendix}{}%
The unfolding relation only holds if all de Bruijn indices up to $k$ are bound in $a$. 
\begin{lemma}[][unfolds_bound]\label{inst-bound}
  If $\inst{H}{k}sa{s'}$, then $s' < k$
\end{lemma}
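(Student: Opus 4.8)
The plan is to argue by induction on the derivation of $\inst{H}{k}sa{s'}$, case-splitting on the four rules of \Cref{unfolding}. Throughout I use the standard characterisation of the boundedness predicate: the variable $n$ is bounded by $k$ exactly when $n < k$; an application is bounded by $k$ exactly when both its subterms are; and an abstraction $\lambda u$ is bounded by $k$ exactly when $u$ is bounded by $\natS k$. A key point is that the induction has to be on the unfolding derivation itself rather than on the structure of $s$, since the free-variable rule recurses on a \emph{different} term, namely the value bound to $n$, and at the reset index $0$.

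The three structural cases are immediate. In the bound-variable rule we have $s = n$, $s' = n$ together with the side condition $n < k$, so $s'$ is bounded by $k$ at once. In the application rule the two premises $\inst{H}{k} s a {s'}$ and $\inst{H}{k} t a {t'}$ give, by the inductive hypotheses, that $s'$ and $t'$ are both bounded by $k$, hence so is $s' t'$. In the abstraction rule the premise $\inst{H}{\natS k} s a {s'}$ yields by the inductive hypothesis that $s'$ is bounded by $\natS k$, which is precisely the condition for $\lambda s'$ to be bounded by $k$.

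The only case needing a little more work is the free-variable rule, where $s = n$ with $n \geq k$ and $s'$ arises from the sub-derivation $\inst{H}{0} s b {s'}$ at index $0$. Here the inductive hypothesis only gives $s' < 0$, i.e.\ that $s'$ is closed. I then obtain $s' < k$ from an auxiliary monotonicity statement for boundedness: if a term is bounded by $j$ and $j \leq k$, then it is bounded by $k$; instantiating with $j = 0 \leq k$ turns closedness into boundedness by $k$. This monotonicity fact is itself a routine induction on the boundedness predicate, and I expect it to be the only extra ingredient required. Overall I anticipate no genuine obstacle: once the closedness-implies-boundedness step is isolated, the statement follows by a direct induction on the unfolding relation, with this weakening step being the only place where the trivial structural bookkeeping is interrupted.
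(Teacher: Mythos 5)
Your proof is correct and follows the same route as the paper, which simply performs induction on the derivation of $\inst{H}{k}sa{s'}$. The one detail you isolate — using monotonicity of boundedness to lift the closedness obtained in the free-variable case (where the sub-derivation runs at index $0$) up to boundedness by $k$ — is exactly the step implicit in the paper's one-line argument.
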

\begin{proof}
  Induction on $\inst{H}{k}sa{s'}$.
\end{proof}
In particular for $k=0$, unfolding results in closed terms.

\end{technicalappendix}

\begin{technicalappendix}{}
  A heap is extended by another heap if the latter contains a superset of the entries:
  \begin{definition}[][extended]
   $H \subseteq H' := \forall a, H[a] \neq \none \rightarrow H[a] = H'[a]$
  \end{definition}
\begin{lemma}[][extended_PO]\label{extend-PO}
  Heap extension $H \subseteq H'$ is transitive and reflexive.
\end{lemma}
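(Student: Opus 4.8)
The plan is to unfold \Cref{extended} and verify the two closure properties of a preorder separately, each directly from the definition $H \subseteq H' := \forall a,\, H[a] \neq \none \rightarrow H[a] = H'[a]$. No auxiliary lemmas are needed; both parts are immediate from the logical shape of the defining implication.

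For reflexivity, I would fix an arbitrary heap $H$ and an address $a$ with $H[a] \neq \none$, and observe that the required conclusion $H[a] = H[a]$ holds by reflexivity of equality. Hence $H \subseteq H$ for every $H$.

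For transitivity, I would assume $H \subseteq H'$ and $H' \subseteq H''$ and fix an address $a$ with $H[a] \neq \none$. Applying the first hypothesis gives $H[a] = H'[a]$. The only real step is to propagate the premise along this equality: from $H[a] \neq \none$ and $H[a] = H'[a]$ we obtain $H'[a] \neq \none$, which is exactly what licenses applying the second hypothesis to conclude $H'[a] = H''[a]$. Chaining the two equalities yields $H[a] = H''[a]$, so $H \subseteq H''$.

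I expect no genuine obstacle here: the statement is essentially the observation that a relation defined by \enquote{agreeing wherever the left side is defined} is reflexive and transitive. The single point worth stating explicitly, rather than a difficulty, is the inference $H[a] \neq \none \wedge H[a] = H'[a] \Rightarrow H'[a] \neq \none$ in the transitivity argument, which is what makes the second hypothesis applicable; everything else is unfolding definitions and rewriting by equality.
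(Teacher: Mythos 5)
Your proof is correct and follows the same route as the paper, which simply notes that reflexivity and transitivity are inherited from the corresponding properties of equality; you additionally spell out the one small step (propagating $H[a] \neq \none$ to $H'[a] \neq \none$ via the equality) that the paper leaves implicit.
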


\begin{mproof}{extend-PO}
  Transitivity and reflexivity follow from the same properties for equality.
\end{mproof}
\end{technicalappendix}

\begin{technicalappendix}{}
Heap extension does not change the result of certain operations:
\begin{lemma}[][lookup_extend] \label{extend-compat}
  Assume $H \subseteq H'$
  \begin{enumerate}
  \item If $H[a,n] \neq \none$, then $H[a,n] = H'[a,n]$.
  \item If $\inst{H}{k} sa{s'}$, then  $\inst{H'}{k} sa{s'}$.
  \item If $g \gg_H s$, then  $g \gg_{H'} s$.
  \end{enumerate}
\end{lemma}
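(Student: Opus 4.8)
The plan is to prove the three claims in order, since each relies on its predecessor: the second claim invokes the first in the free-variable case, and the third is an immediate consequence of the second.

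For the first claim I would proceed by induction on $n$, keeping the address $a$ universally quantified, so that the recursive lookup at a different address is covered by the inductive hypothesis. In the base case $n = 0$, the assumption $H[a,0] \neq \none$ forces $H[a] = \some{\envCons g b}$ for some $g,b$, whence $H[a,0] = \some g$; since $H[a] \neq \none$, the extension hypothesis $H \subseteq H'$ gives $H'[a] = H[a]$ and therefore $H'[a,0] = \some g = H[a,0]$. In the step $n = \natS m$, the lookup is non-$\none$ only if again $H[a] = \some{\envCons g b}$, in which case $H[a,\natS m] = H[b,m]$; applying $H \subseteq H'$ at address $a$ and the inductive hypothesis at address $b$ then yields $H'[a,\natS m] = H'[b,m] = H[b,m] = H[a,\natS m]$.

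For the second claim I would do induction on the derivation of $\inst{H}{k}{s}{a}{s'}$ and rebuild the same derivation over $H'$ rule by rule. The bound-variable rule carries no heap dependency and transfers verbatim. In the free-variable rule the side condition $H[a,n-k] = \some{(P,b)}$ is preserved by the first claim (the lookup is non-$\none$, hence unchanged under extension), the premise relating $P$ to its value is heap-independent, and the recursive unfolding premise $\inst{H}{0}{\cdot}{b}{s'}$ is handled by the inductive hypothesis. The abstraction and application rules follow directly from the inductive hypothesis applied to their premises. The third claim then follows by unfolding \Cref{rep-rel-clos}: writing $g = (P,a)$ with $P \gg t$ and $\inst{H}{0}{t}{a}{s}$, the second claim upgrades the unfolding to $\inst{H'}{0}{t}{a}{s}$, and re-applying the defining rule gives $(P,a) \gg_{H'} s$.

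None of the steps is genuinely hard; the only point requiring care is the generalisation over the address in the first claim, which is essential because the lookup function recurses on a new address $b$ while decrementing $n$. Getting the induction variable and the quantifier order right there is what makes the free-variable case of the second claim go through, and hence the whole chain.
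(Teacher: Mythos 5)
Your proof is correct and follows exactly the paper's route: induction on $n$ (with $a$ generalised) for the first claim, induction on the derivation of $\inst{H}{k}{s}{a}{s'}$ for the second with the free-variable case being the only one needing claim~1, and the third claim by unfolding the definition of $\gg_H$. The extra care you take with the quantifier over the address is exactly the detail the paper's terse proof leaves implicit.
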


\begin{proof}
  The first claim follow by induction on $n$. The second claim follows by induction on $\inst{H}{k} s a {s'}$. The only interesting case is the one where $s=n \geq k$, which requires the first claim. The third claim follows from the second by definition of $\gg_H$.
\end{proof}
\end{technicalappendix}

\section{Abstract Machines}\label{sec:abstract_machines}

In order to analyse the two mentioned strategies on a more semantic level than just as implementations on Turing machines we introduce two abstract machines implementing these strategies  -- based on substitutions and based on heaps.
The machines are variants of the ones presented in $\cite{cbvlcm2}$.
Both machines will take $\bigO{\Time{s}}$ abstract steps to evaluate a term $s$, but differ in the size of intermediate states and in the complexity of their respective implementations as Turing machines, which we construct in \Cref{sec:L_in_TM}.

\subsection{Substitution Machine} \label{sec:subst-machine}
\setCoqFilename{LM.AbstractSubstMachine}

We define an abstract machine that uses substitution on programs. The implemented strategy is close to the small-step semantics for L.
One important property is that the size of machine states during the machine run is linear in the size of the intermediate terms.
Therefore, the substitution-based Turing machine will have constant factor overhead for space.

The abstract machine executes terms using two stacks of programs $T$ and $V$ called \emph{task} and \emph{value} stack. The task stack holds the parts of the program yet to be executed, and the value stack holds the already fully evaluated parts.

\begin{figure}
  \vspace{-1em}
  \small\begin{align*}
    (\lamb::P)::T,~V
    &~\red~\tailRec{P'}{T},~Q::V
    &&\text{if }\phi P= \some(Q,P')
    \\
    (\app::P)::T,Q::R::V
    &~\red~\subst R 0 {\lamb::Q\con[\ret]}::\tailRec{P}{T},~V
    && \\
       \text{where }   \tailRec{P}{T} &~:=~ \ITE{P=[]}{T}{P::T}
  \end{align*}
\vspace{-2em}
  \caption {Reduction rules of the substitution machine}
  \label{fig:subst-red}
\end{figure}

The semantics of the substitution machine is defined in \Cref{fig:subst-red}. The machine executes the first command of the topmost program of the task stack. In the \emph{lambda rule}, the command $\lamb$ marks the start of an abstraction. The sub-program corresponding to the body of the abstraction is moved to the value stack. In the \emph{application rule}, the topmost values are applied to each other: The program $R$ is instantiated with the argument $Q$ to obtain a new task to be evaluated.

We need tail call optimisation $\tailRec{}{}$ to guarantee that the size of the machine state is linear in the size of the represented term.
Without it, the application rule could pile up return-tasks $P=[]$ inside the task stack, invalidating \Cref{subst-space}. %
\begin{onlyfullversioncomment}{}
While we only need tail call optimisation for the application rule, adding it to the lambda rule as well streamlines proofs and allows us to avoid a rule to discard empty programs.
\end{onlyfullversioncomment}{}
The \emph{initial state $\tau_s$} for a term $s$ is $\tau_s~:=~([\gamma s],[])$.

The machine evaluates \L with a number of steps linear in the time-measure:

\begin{theorem}[Substitution machine runtime][correctTime']\label{subst-time} %
  If $\TimeBS{s}{k}{t}$, then
  $\tau_s \red^{3k+1} ([],[P])$ for some $P$ with $P \gg t$.
\end{theorem}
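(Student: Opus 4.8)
The plan is to prove a context-generalized invariant and recover the theorem as the special case of the empty context. Concretely, I would prove: for every closed term $s$ with $s \red^k \lambda u$, and for every continuation program $R$, task stack $T$ and value stack $V$,
\[
((\gamma s \con R) :: T,\ V)\ \red^{3k+1}\ (\tailRec{R}{T},\ \gamma u :: V).
\]
Taking $R = T = V = []$ and using $\tailRec{[]}{[]} = []$ turns the left-hand side into $\tau_s = ((\gamma s \con []) :: [], [])$ and the right-hand side into $([], [\gamma u])$; since $\gamma u \gg \lambda u = t$ by the defining rule of $\gg$, this is exactly the claim with $P := \gamma u$. The advantage of carrying $R$, $T$, $V$ is that the statement becomes compositional and the inductive cases for applications can reuse it.

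I would prove the invariant by strong induction on $k$, splitting on whether $s$ is a value. As $s$ is closed it is either an abstraction or an application. If $s = \lambda u$ (which forces $k = 0$), then $\gamma s \con R = \lamb :: (\gamma u \con \ret :: R)$, so the lambda rule fires exactly once: by \Cref{phi-gamma} we have $\phi(\gamma u \con \ret :: R) = \some(\gamma u, R)$, hence the machine steps to $(\tailRec{R}{T},\ \gamma u :: V)$ in $1 = 3\cdot 0 + 1$ step, as required.

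For the application case $s = s_1 s_2$, I would first invoke a factorization of call-by-value reduction: since $s_1 s_2$ reduces deterministically to the normal form $\lambda u$, the reduction must first drive the function to a value, then the argument, then contract. That is, there exist $w_1, w_2$ and a split $k = k_1 + k_2 + 1 + k_3$ with $s_1 \red^{k_1} \lambda w_1$, $s_2 \red^{k_2} \lambda w_2$ and $\subst{w_1}{0}{\lambda w_2} \red^{k_3} \lambda u$; closedness is preserved throughout, so each intermediate term is closed and the hypothesis applies at $k_1,k_2,k_3 < k$. Writing $\gamma(s_1 s_2) \con R = \gamma s_1 \con (\gamma s_2 \con [\app] \con R)$, I would apply the hypothesis to $s_1$ with continuation $\gamma s_2 \con [\app] \con R$ (costing $3k_1+1$ steps; since this continuation is nonempty, $\tailRec{}{}$ leaves it as the new top task), then to $s_2$ with continuation $[\app] \con R$ (costing $3k_2+1$ steps), reaching $((\app :: R) :: T,\ \gamma w_2 :: \gamma w_1 :: V)$. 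One application step now fires, producing the task $\subst{\gamma w_1}{0}{\lamb :: \gamma w_2 \con [\ret]}$, which by \Cref{subst-gamma} equals $\gamma(\subst{w_1}{0}{\lambda w_2})$. A final use of the hypothesis on $\subst{w_1}{0}{\lambda w_2}$ with empty continuation contributes $3k_3+1$ steps and lands in $(\tailRec{R}{T},\ \gamma u :: V)$. Summing,
\[
(3k_1+1) + (3k_2+1) + 1 + (3k_3+1) = 3(k_1+k_2+k_3+1) + 1 = 3k+1.
\]

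I expect the main obstacle to be the factorization lemma for call-by-value reduction: establishing that any reduction of $s_1 s_2$ to a value decomposes into the phases \emph{reduce function}, \emph{reduce argument}, \emph{contract} with additive step counts. This is precisely where determinism of $\red$ and the fact that closed irreducible terms are exactly abstractions are essential; one may equivalently package it as a big-step evaluation relation $s \Downarrow^k v$ proven equivalent to $s \red^k v$ on closed terms, and then run the induction on the $\Downarrow$-derivation. Everything else is careful but routine stack bookkeeping, tracking the tail-call optimisation $\tailRec{}{}$ and the two compatibility lemmas \Cref{phi-gamma} and \Cref{subst-gamma}.
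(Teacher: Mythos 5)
Your proposal is correct and matches the paper's proof essentially step for step: the same generalisation over a continuation $Q$ and stacks $T,V$, the same $3k_i+1$ accounting, and the same uses of \Cref{phi-gamma} and \Cref{subst-gamma}. The one "obstacle" you flag --- the factorization of the reduction of $s_1 s_2$ into function/argument/contractum phases --- is already free here, since the hypothesis $\TimeBS{s}{k}{t}$ \emph{is} the big-step relation of \Cref{def:time-bs} (proven equivalent to $\red^k$ in \Cref{lem-bigstep-time}), and the paper simply inducts on that derivation, exactly as in the alternative you yourself suggest at the end.
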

\begin{mproof}{subst-time}
  We show the generalisation, \emph{if $\TimeBS{s}{k}{t}$, then for all $Q,T,V$ we have $((\gamma s \con Q) :: T,V) \red^{3k+1} (\tailRec{Q}{T},P::V)$ for some $P$ with $P \gg t$}, from which the claim follows for $Q=R=V=[]$.

  Proof by induction on $\TimeBS{s}{k}{t}$ as defined in \Cref{def:time-bs}.  

  In the case $\TimeBS{\lambda s}{0}{\lambda s}$, we have 
\begin{align*}
  ((\gamma (\lambda s) \con Q) :: T,V) =& ((\lamb :: \gamma s :: \ret \con Q) :: T,V)&&\\
  \red& (\tailRec{Q}{T},\gamma s::V)&&\text{\Cref{phi-gamma}}
\end{align*}
and $\gamma s \gg s$ holds by definition.

In the case $\TimeBS{s t}{k_1 + k_2 + 1 + k_3}{u}$ with all names as in \Cref{def:time-bs}, we have 
\begin{align*}
  ((\gamma ( s t) \con Q) :: T,V) &= ((\gamma s \con \gamma t \con \app :: Q) :: T,V)&&\\
&\red^{3k_1 + 1} (\tailRec{(\gamma t \con \app :: Q)}{T},\gamma s'::V)&&\text{IH for $\TimeBS{s}{k_1}{\lambda s'}$}\\
&\red^{3k_2 + 1} (\tailRec{(\app :: Q)} {T},\gamma t'::\gamma s'::V)&&\text{IH for $\TimeBS{t}{k_2}{\lambda t'}$}\\
&\red ((\subst {(\gamma s')}{0}{\gamma (\lambda t')} :: \tailRec Q T,\gamma t'::\gamma s'::V)\\
&= (\gamma (\subst {s'}{0}{\lambda t'}) :: \tailRec Q T,V)&&\text{\Cref{subst-gamma}}\\
      &\red^{3k_3+1} (\tailRec Q T ,\gamma u::V)&&\text{IH for $\TimeBS{\subst{s'}{0}{\lambda t'}}{k_3}{\lambda u}$}
\end{align*}
Note that $\tailRec{}{}$ is $::$ in the first two reductions.

The claim follows as $3(k_1+k_2+1+k_3) + 1 = (3k_1 + 1) + (3k_2 + 1) + 1 + (3k_3 + 1)$ and $\gamma u \gg \lambda u$ by definition.
\end{mproof}

The size $\size T$ of $T$ is defined to be just the sum of the sizes of the elements in $T$, and similar for $V$. The size of a state is defined by $\size{(T,V)}=\size{T}+\size{V}$. We write $\tau \red^*_m \tau' $ for a sequence of machine reductions where the largest state has size $m$.

The maximal machine state size when evaluating $s$ is asymptotically as large as $\Space{s}$. 
\begin{theorem}[Substitution machine state size][correctSpace']\label{subst-space}  %
  If $\SpaceBS{s}{m}{t} $, then $\tau_s \red^*_{m'} ([],[P])$ for some $P$ and $m'$ with $P \gg t$ and $m \leq m' \leq 2m$.
\end{theorem}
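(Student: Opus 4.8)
The plan is to reuse the run produced by \Cref{subst-time} and only superimpose the space analysis on it. By \Cref{subst-time} the run $\tau_s \red^{3k+1} ([],[P])$ with $P \gg t$ already exists, so it remains to bound the size $m'$ of its largest state. The bridge to the space measure is a decompilation that assigns to every reachable state $(T,V)$ the $\L$-term it currently represents, obtained by re-wrapping each value program on $V$ with $\lamb$ and $\ret$ and plugging the value stack into the continuation described by $T$. The key structural fact, which I would prove by the same context-generalised induction used for \Cref{subst-time} (generalising over a surrounding $Q,T,V$), is that along the run the decompiled term ranges exactly over the intermediate terms $s_0 \red \cdots \red s_k$: a lambda step leaves the decompiled term unchanged, an application step advances it by one $\L$-step, and several consecutive machine states may decompile to the same $s_i$.

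Given this correspondence, the bound $m \le m' \le 2m$ splits into an upper and a lower estimate, each following from a size computation for the three rules. A lambda step never increases the state size, since moving the body $Q$ to $V$ strips its $\lamb$ and $\ret$ delimiters; only the application rule can raise the size, because $\subst{R}{0}{\lamb::Q\con[\ret]}$ duplicates the argument once per free occurrence of the bound variable (this is exactly where size explosion shows up). Hence among all states decompiling to a fixed $s_i$ the largest is the one entered immediately after the application step that produces $s_i$ (and $\tau_s$ itself for $s_0$). For the upper bound I would show that every reachable state has size at most $\size{\gamma s_i}$ for its decompiled $s_i$, so by \Cref{size-gamma} it is at most $2\size{s_i}-1 \le 2m$, giving $m' \le 2m$. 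For the lower bound I would show that the distinguished post-application state of $s_i$ has size at least $\size{s_i}$, and instantiating at the largest intermediate term yields $m \le m'$.

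The delicate point, and the reason the statement is formalised in Coq, is the exact bookkeeping relating a state to $\size{\gamma s_i}$. Storing a value on $V$ as its bare body (a separate program) removes two commands but adds one unit of per-program overhead, so it costs $-1$ rather than $-2$; balancing these per-value discrepancies against the duplication performed by the application rule is where the constants are easy to mishandle. The lower bound in particular relies on the intermediate terms being closed: the spine $(\lambda w_1)(\cdots((\lambda w_v)(\text{contractum}))\cdots)$ realised by the $v$ values left on $V$ after an application step carries enough structure that $\size{\gamma s_i}-\size{s_i}\ge v+2$, which dominates the at most $v$ units lost to the value stack and keeps the post-application state above $\size{s_i}$. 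I expect the main technical work to be assembling these local inequalities into the clean bound $m \le m' \le 2m$ while correctly threading the context programs $Q,T,V$ through the induction.
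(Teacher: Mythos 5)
Your route is genuinely different from the paper's: the paper proves a generalisation of the statement by induction on the big-step derivation $\SpaceBS{s}{m}{t}$, carrying the machine context through additive terms $\size{\tailRec{Q}{T}}+\size{V}$ in both bounds, whereas you work on the small-step trace via a decompilation of states to the intermediate terms $s_i$ and then compare each state's size with $\size{s_i}$ and $\size{\gamma s_i}$ pointwise. The decompilation correspondence and the lower-bound half of your argument are essentially sound. However, the upper bound has a genuine gap: the invariant \emph{every reachable state has size at most $\size{\gamma s_i}$} is false. You correctly account for the per-program overhead on the value stack (each value loses the $\lamb/\ret$ wrapper but gains its own list header, net $-1$), but you ignore the symmetric overhead on the \emph{task} stack: whenever the application rule fires with a non-empty continuation $P$, that $P$ survives as an additional program on $T$, contributing an extra list-header unit with no compensating removal of commands. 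Concretely, for $s=((\lambda 0)(\lambda 0))(\lambda 0)$ the state reached after three machine steps is $\bigl([[\lamb,\var\,0,\ret],\,[\lamb,\var\,0,\ret,\app]],\,[]\bigr)$ of size $4+5=9$, while it decompiles to $s_1=(\lambda 0)(\lambda 0)$ with $\size{\gamma s_1}=8$. In general a reachable state has size $\size{\gamma s_i}-1+\length{T}-\length{V}$, so your bound fails whenever the task stack holds more than one program and the value stack is short.

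The theorem itself is not endangered here ($9\le 2\size{s_1}=10$), but closing the gap requires an ingredient your proposal does not contain: an invariant controlling $\length{T}$, e.g.\ that every task program below the topmost one is non-empty and contains at least one $\app$ belonging to a pending application frame of the evaluation context of $s_i$, so that $\length{T}-1$ is bounded by the number of application nodes of $s_i$; combined with the observation that $2\size{s_i}-\size{\gamma s_i}$ exceeds that count (every term contains a variable occurrence), this recovers $m'\le 2m$. This is precisely the bookkeeping that the paper's formulation absorbs automatically by stating the induction hypothesis with the context sizes added on both sides of the inequality; if you keep your per-state comparison against $\size{\gamma s_i}$, you must supply that task-stack invariant explicitly, on top of the per-value accounting you already identified as delicate.
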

\begin{mproof}{subst-space}
  We show a generalisation, \emph{if $\SpaceBS{s}{m}{t}$, then for all $Q,T,V$ we have $((\gamma s \con Q)::T,V) \red^*_{m'} (\tailRec Q T,P::V)$ for some $P$,$m'$ with $P \gg t$ and $ m+\size {\tailRec P T}+\size V \leq m' \leq 2m+\size {\tailRec P T}+\size V$}, from which the claim follows with $Q=T=V=[]$.
  
 Proof by induction on $\SpaceBS{s}{m}{t}$ as defined in \Cref{def:space-bs}. By definition of $\SpaceBS{}{}{}$, this proof is very similar to the one for \Cref{subst-time}. The only difference is the needed equalities between the various space-measures $m_i$. Those are proven by tedious, but straightforward computations when using the facts that $\size{P} + \size{T} \leq \size{\tailRec{P}{T}} \leq \size{P} + \size{T} + 1$ and \Cref{subst-gamma} and \Cref{size-gamma} and the fact that $\SpaceBS{s}{m}{t}$ implies $ \size{s} \leq m \geq \size{t}$.
\end{mproof}

\subsection{Heap Machine}\label{sec:heap-machine}
\setCoqFilename{LM.AbstractHeapMachine}
This machine uses the heap described in \Cref{sec:closures-heap} to enable sharing of environments. One important feature of this machine is that the size of intermediate states does only depend on the size of the initial term $s$ and the number of machine steps, but not on $\Space{s}$.

\begin{figure}
  \small\vspace{-1em}
  \begin{align*}
    (\var\,n :: P,a)::T,~V,~H
    &~\red~(P,a)::T,~g::V,~H
    &&\text{if }H[a,n]=\some{g}
    \\
    (\lamb :: P,a)::T,~V,~H
    &~\red~(P',a)::T,~(Q,a)::V,~H
    &&\text{if }\phi P= \some(Q,P')
    \\
    (\app::P,a)::T,~g::(Q,b)::V,~H
    &~\red~(Q,b')::(P,a)::T,~V,~H'
    &&\text{if }\M{put}\,H\,\envCons{g}{b}={(H',b')}
    \\
    ([],a)::T,~V,~H
    &~\red~T,~V,~H
    &&
  \end{align*}
  \vspace{-2em}
  \caption {Reduction rules of the heap machine}
  \label{fig:heap-red}
\end{figure}

The machine is defined in \Cref{fig:heap-red}. Its task and value stacks contain closures. The \emph{variable rule} loads the value bound to a variable to the value stack. The \emph{lambda rule} copies a subprogram representing an abstraction to the value stack. The \emph{application rule} calls the subprogram $Q$ after adding the value $g$ as argument to the environment of $Q$. The \emph{return rule} drops finished tasks. The use of closures instead of programs allows an explicit variable rule instead of program-level substitution.
The \emph{initial state $\sigma_s$} for a closed term $s$ is $([(\gamma s,0)],[],[])$ as $ \inst{H}{0}{s}{0}{s}$ by \Cref{bound-inst}.

The machine evaluates \L with a number of steps linear in the time-measure:
\begin{theorem}[Heap machine runtime][correctTime']\label{closTime} If $\TimeBS{s}{k}{t}$ and $s$ is closed, then
  $\sigma_s \red^{4k+2} ([],[g],H)$ for some $g,H$ with $g \gg_H t$.
\end{theorem}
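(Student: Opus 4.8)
The plan is to mirror the structure of the proof of the substitution machine runtime (Theorem \ref{subst-time}), but now tracking heaps and closures instead of bare programs. First I would prove a suitable generalisation that threads through an arbitrary task stack $T$, value stack $V$, and heap $H$, together with an environment pointer $a$ under which the current term unfolds. Concretely, I would show: \emph{if $\TimeBS{s}{k}{t}$ and $\inst{H}{0}{s}{a}{\hat s}$ for the closed term $\hat s$ actually being evaluated, then for all $Q,T,V$ we have $((\gamma s \con Q, a)::T,~V,~H) \red^{4k+?} ((Q,a)::T,~g::V,~H')$ for some $g,H'$ with $H \subseteq H'$ and $g \gg_{H'} t$}, defaulting to $Q=T=V=[]$ and $a=0$ to recover the statement. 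I would carry out the proof by induction on the derivation of $\TimeBS{s}{k}{t}$ as defined in \Cref{def:time-bs}, exactly as in the substitution case.

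The base case is $\TimeBS{\lambda s}{0}{\lambda s}$: here the machine starts with $\lamb :: \gamma s \con [\ret] \con Q$, fires the lambda rule using \Cref{phi-gamma} to split off $\gamma s$, and pushes the closure $(\gamma s, a)$ onto the value stack in a single step, with the heap unchanged. That this closure represents $\lambda s$ relative to $H$ follows from \Cref{rep-rel-clos}, since $\gamma s \gg s$ and $s$ unfolds appropriately under $a$. The inductive step is the application case $\TimeBS{s t}{k_1+k_2+1+k_3}{u}$: the program is $\gamma s \con \gamma t \con \app :: Q$, so I apply the induction hypothesis first to $s$ (producing a value closure for $\lambda s'$), then to $t$ (producing a value closure for $\lambda t'$), then fire the application rule, which does a $\M{put}$ to extend the heap by binding the argument, and finally apply the induction hypothesis to $\subst{s'}{0}{\lambda t'}$ on the newly pushed task.

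The main obstacle, and the place where this proof genuinely diverges from the substitution case, is the application step. There, instead of appealing to \Cref{subst-gamma} to realise the substitution on programs, I must show that evaluating the body $s'$ under the \emph{extended} environment $b'$ (where $H[b'] = \envCons{g}{b}$ and $g \gg_{H'} \lambda t'$) correctly simulates the object-level reduction $\subst{s'}{0}{\lambda t'}$. This is precisely where \Cref{inst-subst} is needed: it converts unfolding under the environment extended by $g$ into the substituted term, i.e. it guarantees $\inst{H'}{0}{s'}{b'}{\subst{\hat{s'}}{0}{\widehat{\lambda t'}}}$, so that the induction hypothesis can be legitimately applied to the reduct. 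A secondary technical nuisance is heap monotonicity: each $\M{put}$ only extends the heap, so all earlier representation facts must be transported along $H \subseteq H'$ using \Cref{extend-compat}, and the pointers established in the $s$-phase must still be valid after the $t$-phase and after the application's allocation.

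Finally, the step count bookkeeping must be reconciled with the claimed bound $4k+2$. The application rule costs one extra step beyond the substitution machine's accounting because the heap machine additionally uses an explicit variable rule and a return rule to drop finished tasks, so each structural unit of work costs four abstract steps rather than three; summing $4k_1 + 4k_2 + 4k_3 + \text{(constant)}$ over the three recursive calls plus the single application step, together with the initial and terminal adjustments, should yield exactly $4k+2$. I expect this arithmetic, like its counterpart in \Cref{subst-time}, to be routine once the step counts for the variable, lambda, application, and return rules in \Cref{fig:heap-red} are fixed per recursive call.
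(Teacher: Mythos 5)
Your proposal has the right skeleton and correctly isolates the two genuinely new ingredients relative to \Cref{subst-time}: the use of \Cref{inst-subst} to relate evaluation of the un-substituted body under an extended environment to the object-level substitution, and the need to transport representation facts along heap extension ($H \subseteq H'$). However, the generalisation you state is oriented the wrong way round, and this creates a real gap. You attach the judgment $\TimeBS{s}{k}{t}$ to the term $s$ whose compilation $\gamma s$ sits in the machine, and let $\hat s$ be its unfolding. The induction must instead be on the evaluation of the \emph{unfolded} (closed) term: the paper's generalisation is that if $\TimeBS{s}{k}{t}$ and $\inst{H}{0}{s_0}{a}{s}$, then $((\gamma s_0 \con P,a)::T,V,H)$ reduces in $4k+1$ steps to a state with a closure representing $t$ on the value stack, for an \emph{arbitrary} $s_0$ that unfolds to $s$ under $a$.

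The distinction matters twice. First, in the base case $\TimeBS{\lambda s}{0}{\lambda s}$ the machine-level term $s_0$ need not be an abstraction: it can be a variable whose binding in the environment $a$ unfolds to $\lambda s$, and this is exactly the case that exercises the variable rule of \Cref{fig:heap-red}. Your base case treats only the abstraction case, so the variable rule is never used anywhere in your induction; yet after any beta step the body being evaluated typically contains free de Bruijn indices that must be resolved by heap lookup. Second, in the application case the inductive hypotheses are for $\TimeBS{s}{k_1}{\lambda s'}$, $\TimeBS{t}{k_2}{\lambda t'}$ and $\TimeBS{\subst{s'}{0}{\lambda t'}}{k_3}{u}$, i.e.\ for the unfolded subterms; with your orientation the machine-level subterms $s_1,t_1$ need not satisfy any $\TimeBS{}{}{}$ judgment at all (e.g.\ $t_1$ may be a variable, for which no $\TimeBS{}{}{}$ rule exists), so the induction hypothesis would be inapplicable. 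Once the generalisation is re-oriented as above, your application-case argument via \Cref{inst-subst} and heap monotonicity (\Cref{extend-PO}, \Cref{extend-compat}) goes through essentially as you describe, and the step count resolves to $4k+1$ for the generalisation and $4k+2$ for the theorem after one final application of the empty-task rule.
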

\begin{mproof}{closTime}
  We show a generalisation, \emph{
  If $\TimeBS{s}{k}{t}$ and $\inst{H}{0}{s_0}as$, then there are $g$ and $H'$ with $g \gg_{H'} t$ such that $((\gamma s_0 \con P,a)::T,V,H) \red^{4k+1} ((P,a)::T,g::V,H')$ for any $P,T,V$, and $H \subseteq H'$}. Here $H \subseteq H'$ is meant as  in \Cref{coq:extended}. The original claim follows with $P=T=V=H=[]$, \Cref{bound-inst} and the reduction rule for empty tasks.

Proof by induction on $\TimeBS{s}{m}{t}$ as in \Cref{def:time-bs}.
In the case of $\TimeBS{\lambda s}{0}{\lambda s}$, a case distinction on $\inst{H}{0} {s_0} a {\lambda s}$ yields two cases: $s_0$ is either a variable with a value bound in $a$, or $s_0$ is an abstraction.

In the case $s_0=n$, we obtain $Q,b,s_1$ such that $H[a,n] = \some (Q,b)$ with $Q \gg s_1$ and $\inst{H}{0}{s_1} b {\lambda s}$. The claim holds as $(Q,b) \gg_H \lambda s$ and $((\gamma n \con P,a)::T,V,H) = ((\var\,n ::P,a)::T,V,H)
\red ((P,a)::T,(Q,b)::V,H)$.

In the case $s_0=\lambda s_1$, we have that $\inst{H}{1} {s_1} a s$. The claim holds as $(\gamma s_1,a) \gg_H \lambda s$ and $((\gamma (\lambda s_1) \con P,a)::T,V,H) = ((\lamb :: \gamma s_1 \con \ret :: P,a)::T,V,H)
\red ((P,a)::T,(\gamma s_1,a)::V,H)$.

In the other case of the induction, $\TimeBS{st}{k_1 + k_2 + 1 + k_3}{u}$, we have $\TimeBS{s}{k_1}{\lambda s'}$ and $\TimeBS{t}{k_2}{\lambda t'}$ and $\TimeBS{\subst{s'}{0}{\lambda t'}}{k_3}{u}$ and an inductive hypothesis for each of those. We also have $\inst{H}{0} {s_0} a {st}$. Now $s_0=s_1 t_1$ must be an application. Note that even in the second rule of \Cref{unfolding}, the definition of $\gg$ on programs implies that the unfolded term would be an abstraction.

So we have $\inst{H}{0}{s_1}a s$ and $\inst{H}{0} {t_1} at$ for some $s_1,t_1$. We now construct the reduction of the machine using the inductive hypothesis. We will explain where the new objects in the following reduction come from in the next paragraph. 
\begin{align}
  (\gamma (s_1t_1)\con P,a)::T,V,H)
  &= (\gamma s_1 \con \gamma t_1 \con \app ::P,a)::T,V,H) &\\
  &\red^{4k_1+1} ((\gamma t_1 \con \app ::P,a)::T,(\gamma s_2,a_2)::V,H_1) & \text{IH} \label{eq:IH-1}\\
  &\red^{4k_2+1} ((\app ::P,a)::T,g_t::(\gamma s_2,a_2)::V,H_2) & \text{IH} \label{eq:IH-2}\\
  &\red  ((\gamma s_2,a_2') ::(P,a)::T,V,H_2') \label{eq:beta}\\
  &\red^{4k_3+1} (([],a_2')::(P,a)::T,g_u::V,H_3) & \text{IH} \label{eq:IH-3}\\
  &\red ((P,a)::T,g_u::V,H_3)
\end{align}

In this reduction, the inductive hypothesis for $s_1$ in (\ref{eq:IH-1}) yields $s_2,a_2$ and $H_1$ such that $H \subseteq H_1$ and $\inst{H_1}{1} {s_2}{a_2}{s'}$. The inductive hypothesis on $t_1$ in (\ref{eq:IH-2}) yields $g_t$ and $H_2$ such that $H_1 \subseteq H_2$ and $g_t \gg_{H_2} \lambda t'$. In the step for beta reduction, (\ref{eq:beta}), we have $(H_2',a_2') = \M{put}{H_2}{\envCons{g_t}{a_2}}$ and $H_2 \subseteq H_2'$. With \Cref{inst-subst}, this implies $\inst{H_2'}{0} {s_2}{a_2'} {\subst{s'}{0}{\lambda t'}}$. This now allows the use of the third inductive hypothesis in (\ref{eq:IH-3}), obtaining $g_u$ and $H_3$ with $g_u \gg u$ and $H_2' \subseteq H_3$. Note that we use \Cref{extend-PO} to transfer several properties along the changing heaps. Now the claim holds for $g_u$ and $H_3$.
\end{mproof}

We define the size of a closure as $\size{(P,a)}:=\size{P}+a$ and the size of a heap entry to be $\size{(g,a)}:=\size{g}+a$. The size of a state is the sum of the sizes of all elements in $T$, $V$ and $H$.

The size of the $k$-th state starting from $\sigma_s$ is a polynomial in $k$ and $\size{s}$:
\begin{technicalappendix}{}
\begin{lemma}[][Analysis]\label{closSpace-detail}
  Assume $\sigma_s \red^k (T,V,H)=\sigma$ for some term $s$.
\begin{enumerate}
  \item $\length{T}+\length{V}\leq k+1$
  \item $\length{H} \leq k$
  \item $\size P \leq \size{s}$ and $a \leq \length H$ for all $P/a \in T \con V$
  \item $\size P \leq \size{s}$ and $a \leq \length H$ and $b \leq \length H$ for all $\envCons{(P,a)}{b} \in H$
  \end{enumerate}
\end{lemma}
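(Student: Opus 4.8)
The plan is to prove the four statements \emph{simultaneously} by induction on the number of steps $k$, rather than one at a time, because the invariants are mutually dependent. Specifically, when the variable rule pushes a closure $g$ loaded from the heap onto $V$, the size and address bounds for $g$ (part of item~(3)) are exactly what item~(4) guarantees for the current heap; conversely, when the application rule writes a cell $\envCons{g}{b}$ onto $H$ via $\M{put}$, the bounds for that cell (item~(4)) are precisely what item~(3) provides for the value stack it consumed. I would therefore bundle (1)--(4) into a single invariant $\mathrm{Inv}(T,V,H)$, show that $\sigma_s$ satisfies it, and show that each of the four reduction rules of \Cref{fig:heap-red} preserves it while the step counter increases by one.

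For the base case $k=0$ the state is $([(\gamma s,0)],[],[])$, so $\length{T}+\length{V}=1\le 1$, $\length{H}=0\le 0$, the lone closure $(\gamma s,0)$ has $\size{\gamma s}\le\size{\gamma s}$ and address $0\le 0$, and item~(4) is vacuous. For the inductive step I would assume $\sigma_s\red^{k}\sigma\red\sigma'$ with $\mathrm{Inv}(\sigma)$ and case on the rule firing at $\sigma\red\sigma'$. Items~(1) and~(2) are the routine part: reading off the rules, $\length{T}+\length{V}$ increases by exactly $1$ in the variable and lambda rules and strictly decreases in the application and return rules, hence grows by at most one per step and stays $\le k+1$; and only the application rule touches $H$, adding a single cell through $\M{put}$, so $\length{H}$ grows by at most one and stays $\le k$.

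The substance lies in items~(3) and~(4). The key structural fact is that every program occurring in a reachable state is a contiguous subprogram of $\gamma s$: the variable, application and return rules only drop the leading command of a program, while the lambda rule replaces $(\lamb::P,a)$ by the closures $(P',a)$ and $(Q,a)$ with $\phi P=\some{(Q,P')}$, and by inspection of the definition of $\phi$ (cf.\ \Cref{phi-gamma}) both $Q$ and $P'$ are subprograms of $P$; hence $\size{Q},\size{P'}\le\size{P}$. Consequently no program size ever exceeds $\size{\gamma s}$, which by \Cref{size-gamma} is at most $2\size{s}-1$, settling the program-size halves of (3) and (4). For the address bounds I would use the precise semantics $\M{put}\,H\,e=(H\con[e],\natS\length{H})$: after an application step the fresh address $b'$ equals $\length{H'}$, and every previously valid address, being $\le\length{H}\le\length{H'}$, stays in range, so the closure $(Q,b')$ pushed onto $T$ satisfies $b'\le\length{H'}$ and the written entry $\envCons{g}{b}$ inherits $a,b\le\length{H}\le\length{H'}$ from item~(3) applied to the old value stack. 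The only remaining point is the value pushed by the variable rule: I would record the small auxiliary fact that $H[a,n]=\some{g}$ forces $g$ to be the head of some cell of $H$, so that item~(4) for the current heap immediately supplies the size and address bounds on $g$.

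The main obstacle is exactly the circular dependency between (3) and (4) just described; it is overcome not by a clever trick but by committing to the simultaneous induction from the outset, so that in each rule the induction hypothesis delivers precisely the bound on the other data structure that the rule consumes. The only genuinely fiddly bookkeeping is tracking the subprogram property through $\phi$ and handling the $\natS\length{H}$ off-by-one in the freshly allocated address, both of which become routine once the combined invariant is stated correctly.
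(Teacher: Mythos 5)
Your proposal is correct and follows essentially the same route as the paper's proof, which is simply a simultaneous induction on $k$ using the fact that $\phi P$ always returns a sublist of $P$; your write-up just spells out the case analysis and the $\M{put}$ bookkeeping that the paper leaves implicit. (The only discrepancy is that you establish $\size{P}\leq\size{\gamma s}\leq 2\size{s}-1$ rather than the literal $\size{P}\leq\size{s}$, but that is the bound actually needed for \Cref{closSpace} and the literal one already fails at $k=0$, so this reflects an imprecision in the statement rather than a gap in your argument.)
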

\begin{mproof}{closSpace-detail}
All claims follow by induction on $k$. The third claim uses that $\phi P$ always returns a sublist of $P$.
\end{mproof}
\end{technicalappendix}

\begin{theorem}[Heap machine state size][correctSpace]\label{closSpace}
  If $\sigma_s \red^k \sigma$, then $\size \sigma \leq (k+1)(3k+4\size{s})$
\end{theorem}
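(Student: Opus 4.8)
The plan is to obtain the bound directly from the component-wise structural invariants in \Cref{closSpace-detail}, turning those four facts into a single arithmetic inequality on $\size\sigma = \size T + \size V + \size H$. So I would fix $\sigma = (T,V,H)$ with $\sigma_s \red^k \sigma$ and bound, separately, the total size of the two stacks and the total size of the heap, then add and simplify.

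First I would bound a single stack entry. A stack entry is a closure $(P,a)$ of size $\size P + a$; by \Cref{closSpace-detail}(3) we have $\size P \leq \size s$, and combining $a \leq \length H$ from (3) with $\length H \leq k$ from (2) gives $a \leq k$. Hence every element of $T \con V$ has size at most $\size s + k$. Since by (1) there are at most $k+1$ such elements in total, this yields $\size T + \size V \leq (k+1)(\size s + k)$.

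Next I would bound the heap. A heap entry $\envCons{(P,a)}{b}$ has size $\size P + a + b$, and by \Cref{closSpace-detail}(4) we get $\size P \leq \size s$ together with $a \leq \length H \leq k$ and $b \leq \length H \leq k$, so each entry has size at most $\size s + 2k$. The one point that needs a little care here is that a heap entry carries \emph{two} pointers — the environment pointer $a$ stored inside the closure and the tail pointer $b$ of the linked list — and both must be charged; this is exactly why invariant (4) bounds both. Using (2) to bound the number of entries by $k$, I obtain $\size H \leq k(\size s + 2k)$.

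Finally I would add the two estimates and compare with the target. Summing gives $\size\sigma \leq (k+1)(\size s + k) + k(\size s + 2k) = (2k+1)\size s + 3k^2 + k$, whereas the claimed bound expands to $(k+1)(3k+4\size s) = 3k^2 + 3k + (4k+4)\size s$; their difference is $(2k+3)\size s + 2k \geq 0$, so the claim follows. I expect no real obstacle in this theorem itself: once \Cref{closSpace-detail} is available it is a routine arithmetic consolidation. The genuinely substantial work lies in \Cref{closSpace-detail}, whose invariants — in particular that every stored program satisfies $\size P \leq \size s$ (because $\phi P$ only ever returns sublists of the compiled body) and that every pointer stays below $\length H$ — must be maintained simultaneously through an induction over the four reduction rules of the heap machine.
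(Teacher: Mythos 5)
Your proof is correct and follows exactly the paper's route: the paper's own proof of this theorem is literally ``Follows from \Cref{closSpace-detail}'', and your argument is just that derivation written out, with the arithmetic checking out (your bound $(2k+1)\size{s}+3k^2+k$ is in fact tighter than the stated $(k+1)(3k+4\size{s})$; the extra slack in the paper's constant absorbs the factor~$2$ between $\size{s}$ and $\size{\gamma s}$ from \Cref{size-gamma}, which is where the $4\size{s}$ comes from). You also correctly identify that the real content lives in \Cref{closSpace-detail}, which the paper likewise proves by induction on $k$ over the machine rules.
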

\begin{mproof}{closSpace}
  Follows from \Cref{closSpace-detail}
\end{mproof}

\section{Simulating L with Turing Machines}\label{sec:L_in_TM}
We now sketch how to construct the Turing machine that simulates \L with polynomially bounded overhead time and constant factor overhead in space.
The considered Turing machines will operate on various kinds of data (e.g. natural numbers, programs, heap closures, heap entries, heaps, \dots).
For programs, we use a symbol for each of the four constructors and a fifth symbol to encode de Bruijn indices in unary.
All other natural numbers will also be encoded in unary, unless explicitly stated.
The encoding of the further structures on tapes is straightforward.

\subsection{The Substitution-based Turing Machine Simulating $\L$}
We construct a Turing machine $M_{\text{subst}}$ that executes the substitution-based strategy from \Cref{sec:subst-machine} for $k$ steps, where $k$ is an input.
The Turing machine takes an additional input~$m$ and aborts if the abstract machine would reach a state of size greater $m$.
\begin{theorem}\label{M_subst}
  There is a Turing machine $M_{\text{\emph{subst}}}$ that, given two binary numbers $k,m$ and a term $s$, halts in time $\bigO{k\cdot\poly{\min{(m,\Space{s})}}}$ and space $\bigO{\min{(m,\Space{s})}+\log{m}+\log{k}}$. Either the machine outputs a term $t$, then $s$ has normal form $t$ and $m\geq \Space{s}$ and $k \geq 3\cdot\Time{s}+1$. Or it halts in one of two other final states: Either a state named \emph{space bound reached}, implying that $m \leq 2\cdot \Space{s}$ holds, or in a state named \emph{space bound not reached}, implying that $k < 3\cdot\Time{s}+1$ holds.
\end{theorem}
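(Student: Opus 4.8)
The plan is to build $M_{\text{subst}}$ as a multi-tape machine that stores an encoding of a substitution-machine state $(T,V)$ (using the program encoding fixed at the start of \Cref{sec:L_in_TM}), initialised to $\tau_s=([\gamma s],[])$, together with a binary counter for the number $j$ of abstract steps already performed. In a main loop I would first test whether the stored state has the terminal shape $([],[P])$, then whether $j=k$, and otherwise simulate one reduction step of \Cref{fig:subst-red}, incrementing $j$ and comparing the size of the new state against $m$. This produces the three announced final states: reaching $([],[P])$ lets the machine decode $P$ to the normal form $t$ of $s$ (the inverse of $\gamma$, which exists by \Cref{gamma-inj}) and output it; hitting $j=k$ first yields \emph{space bound not reached}; and producing a state of size exceeding $m$ yields \emph{space bound reached}.

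For a single step I would realise each rule of \Cref{fig:subst-red} by a fixed subroutine: the lambda rule uses $\phi P$, computed by one scan of $P$ matching $\lamb$ against $\ret$; the application rule uses program substitution $\subst R 0 {\lamb :: Q \con [\ret]}$ and the tail-call operation $\tailRec{\cdot}{\cdot}$, each a single traversal. Since every such traversal runs over programs no larger than the current state, one step costs time polynomial in the state size, and by \Cref{size-gamma} term and program sizes agree up to a constant factor.

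The delicate point -- and the whole reason this is a statement about space -- is to keep the space within a constant factor of $\min(m,\Space{s})$ rather than of $\Space{s}$. The naive danger is the application step: substituting $\lamb :: Q \con [\ret]$ for the index $0$ in $R$ can create a task much larger than $m$, so materialising it in full before comparing its size to $m$ would already cost $\Theta(\Space{s})$ space even when $m$ is tiny. I would therefore make every state-rewriting subroutine \emph{size-capped}: it emits the new state while tracking its running size and, the instant this would exceed $m$, the machine stops and enters \emph{space bound reached}. Then the tape holding the state never grows past $\bigO{m}$, the scratch tapes are reused across iterations and stay linear in the current state, and together with the counter $j$ ($\bigO{\log k}$ bits) and the stored bound $m$ ($\bigO{\log m}$ bits) the total space is $\bigO{\min(m,\Space{s})+\log m+\log k}$; here \Cref{subst-space} supplies the bound $2\Space{s}$ on every state in the regime $m\ge 2\Space{s}$, where no abort occurs. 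For time, each of the at most $k$ simulated steps then runs over states of size $\bigO{\min(m,\Space{s})}$ and costs $\poly{\min(m,\Space{s})}$, while the counter and size-comparison bookkeeping is cheaper, giving $\bigO{k\cdot\poly{\min(m,\Space{s})}}$ overall. I expect this size-capped substitution to be the main obstacle, since correctness of the space bound hinges on never materialising an oversized intermediate task.

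Finally, I would read off the three implications from \Cref{subst-time} and \Cref{subst-space}. In the output case the deterministic run reached its normal form in exactly $3\Time{s}+1\le k$ steps, so $k\ge 3\Time{s}+1$; and as no state exceeded $m$, the maximal state size $m'$ of the run satisfies $m'\le m$, while \Cref{subst-space} gives $\Space{s}\le m'$, hence $m\ge\Space{s}$. In the \emph{space bound not reached} case the run did not reach its normal form within $k$ steps, so by \Cref{subst-time} we must have $3\Time{s}+1>k$. In the \emph{space bound reached} case the aborting state lies on the unique reduction path out of $\tau_s$ and has size $>m$, whereas \Cref{subst-space} bounds every state on that path by $2\Space{s}$, yielding $m<2\Space{s}$; for a non-terminating $s$ these two abort implications hold vacuously, while the output case cannot occur.
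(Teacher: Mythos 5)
Your proposal is correct and follows essentially the same route as the paper's proof: iterate the rules of \Cref{fig:subst-red} on $\tau_s$, keep track of the state size \emph{during} the substitution so that the machine aborts before ever materialising a state larger than $\Theta(m)$, and read the three case implications off \Cref{subst-time} and \Cref{subst-space}. You correctly identify the size-capped substitution as the crux (the paper makes the same point, noting that $\subst{P}{0}{Q}$ can have quadratic size), so there is nothing substantive to add.
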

\begin{onlyfullversioncomment}{}
Furthermore, by \Cref{subst-space}, the machine can only approximate the size of the 'current' term up to a factor of $2$, which further complicates the theorem.
\end{onlyfullversioncomment}
\begin{proof}%

The Turing machine can be constructed by iterating the rules of the abstract substitution machine from \Cref{fig:subst-red} on the initial state $\tau_s$.
The machine has to keep track of the size of the abstract machine state, even \emph{during} the execution of the substitution:
As soon as the size of the next state to be computed is known to exceed $m$, it aborts \emph{before} consuming more than $\Theta(m)$ space. \begin{onlyfullversioncomment}{}
  This is necessary because the result of a substitution $\subst{P}{0}{Q}$ with $\size P + \size Q \in \bigO{m}$ could have quadratic size $\bigO{m^2}$, e.g. if $P$ applies the variable $0$ to itself $m$ times and $Q$ has size $m$ as well.
The function $\phi P$ can be implemented via the tail-recursive $\phi_{k,Q}P$, which takes space and time $\bigO{\size{Q} + k +\size{P}}$, as it just traverses $P$ and accumulates the result.
The argument $k$ during the run is bound by $\Space{s}$.{ }
\end{onlyfullversioncomment}
Then the size of all intermediate states and the overall space consumption follow from \Cref{subst-space}.
The existence of the result for large enough $k$ follows in combination with \Cref{subst-time}.
\end{proof}
The precise specification of the machine is subtle: 
Intuitively, the machine state size is as large as the 'current' term, but we don't know if a state larger $m$ is reached in the first $k$ steps.
Therefore, we don't specify which of the last two cases occurs if \emph{both} bounds on $k$ and~$m$ are exceeded.

If $s$ diverges, \Cref{M_subst} states that $M_\text{subst}$ can only halt in the two special final states (with $\Time{s}=\infty$ for diverging terms $s$). %

\subsection{The Heap-based Turing Machine Simulating $\L$}
We construct a Turing machine executing the heap-based strategy from Sect.~\ref{sec:heap-machine} for $k$ steps:%
\begin{theorem}\label{M_heap}
  There is a Turing machine $M_{\text{\emph{heap}}}$ that, given a number $k$ and a closed term~$s$, halts in time $\bigO{\poly{\size{s},k}}$ and space $\bigO{\size{s}\cdot\poly{k}}$. If $s$ has a normal form $t$ and $k\geq4\cdot\Time{s}+2$, it computes a heap $H$ and a closure $g$ such that $g\gg_H t$. Otherwise, it halts in a distinguished final state (denoting `\emph{failure}').
\end{theorem}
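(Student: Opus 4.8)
The plan is to construct $M_{\text{heap}}$ as a faithful step-by-step simulator of the abstract heap machine of \Cref{sec:heap-machine}, started in the initial state $\sigma_s = ([(\gamma s,0)],[],[])$ and iterated for at most $k$ abstract steps. I would fix a tape layout that keeps the task stack $T$, the value stack $V$, and the heap $H$ on separate tapes, with programs encoded command-by-command as in \Cref{sec:programs} and with de Bruijn indices and all heap addresses written in unary. Under this layout the tape contents representing a state $(T,V,H)$ occupy space proportional to $\size{(T,V,H)}$ up to a constant factor, next to a step counter for $k$.

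Each of the four reduction rules of \Cref{fig:heap-red} becomes a subroutine that inspects the first command of the topmost task and dispatches accordingly: the variable rule performs the lookup $H[a,n]$ by following the linked-list structure $n$ times; the lambda rule computes $\phi P$, which by \Cref{phi-gamma} is realised by a single tail-recursive traversal of $P$ matching $\lamb$ against $\ret$; the application rule performs $\M{put}$, i.e.\ appends one cell at the end of the heap and returns the new address $\natS|H|$ in unary; and the return rule pops an empty task. Each subroutine is a traversal or append over objects bounded by the current state, so it runs in time and scratch space polynomial in the size of the current state, and --- crucially, in contrast to the substitution machine --- no rule performs a substitution that could blow up quadratically.

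For the quantitative bounds I would invoke the abstract analysis. \Cref{closSpace} bounds every state reachable from $\sigma_s$ in at most $k$ steps, regardless of whether $s$ terminates, by $(k+1)(3k+4\size{s}) \in \bigO{\size{s}\cdot\poly{k}}$; since each subroutine needs only polynomial scratch on top of the current state, the overall space stays $\bigO{\size{s}\cdot\poly{k}}$. Each of the at most $k$ steps then costs $\poly{\size{s},k}$, so the whole run costs $\bigO{k\cdot\poly{\size{s},k}}=\bigO{\poly{\size{s},k}}$. For correctness I would use \Cref{closTime}: if the closed $s$ has normal form $t$ then $\sigma_s\red^{4\Time{s}+2}([],[g],H)$ with $g\gg_H t$, and this state is terminal since no rule applies to an empty task stack. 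Hence, when $k\geq 4\cdot\Time{s}+2$, the simulator reaches an empty task stack with a singleton value stack $[g]$ within $k$ steps and outputs $H$ and $g$; in every other case --- $s$ diverging, $k$ too small, or an ill-formed configuration --- the simulator does not observe an empty task stack with a singleton value within $k$ steps, and halts in the distinguished failure state.

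The main obstacle is the precise complexity bookkeeping for the three nontrivial subroutines $H[a,n]$, $\phi$, and $\M{put}$: verifying that each really stays within a polynomial of the current state size and, in particular, that their intermediate scratch never pushes the tape usage past the $\bigO{\size{s}\cdot\poly{k}}$ envelope guaranteed abstractly by \Cref{closSpace}. Once the four subroutines are implemented and their individual time and space are pinned down, the global bounds and the correctness statement follow by plugging \Cref{closTime} and \Cref{closSpace} into the step-by-step simulation.
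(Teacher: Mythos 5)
Your proposal is correct and follows essentially the same route as the paper: iterate the rules of Figure~\ref{fig:heap-red} on $\sigma_s$ for at most $k$ steps, implement $\phi$, $H[a,n]$ and $\M{put}$ as traversals polynomial in the current state, and then plug in Theorem~\ref{closSpace} for the space bound and Theorem~\ref{closTime} for correctness. The only difference is that you spell out the tape layout and the per-rule subroutines in slightly more detail than the paper's sketch does.
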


\begin{proof}
The Turing machine can be constructed by iterating the rule of the abstract substitution machine on the initial state $\sigma_s$.
\begin{onlyfullversioncomment}{}
We already argued on the runtime of $\phi$ for \Cref{M_subst}. And $H[a,n]$ can be computed by iterating over $H$ for at most $n$ times. So each{ }
\end{onlyfullversioncomment}
\begin{onlyicalpversion}
Each{ }
\end{onlyicalpversion}
abstract step $(T,V,H) \red (T',V',H')$ can be implemented in time $\bigO{\poly{\size{(T,V,H)}}}$ and space $\bigO{\max{(\size{(T,V,H)},\size{(T',V',H')})}}$. The space consumption of all involved operations in \Cref{fig:heap-red} is bounded by their input or output. Using \Cref{closSpace}, the size of all intermediate $(T,V,H)$ can be bound by $k$ and $\size{s}$ to derive the claimed resource bounds. The successful computation of $g$ and~$H$ for large enough $k$ follows with \Cref{closTime}.
\end{proof}

\subsection{The Combined Turing Machine Simulating L}
We now combine the machines from the last two sections to execute the heap-machine only if we know that its space consumption is bounded by the space measure of the simulated term:

\begin{theorem}\label{thm:hybrid_simulation}
  There is a Turing machine $M_\text{\L}$ that, given a closed term $s$ that has a normal form $t$, computes a heap $H$ and a closure $g$ such that $g\gg_H t$ in time $\bigO{\poly{\size{s},\Time{s}}}$ and space $\bigO{\Space{s}}$.
\end{theorem}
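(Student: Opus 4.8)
The plan is to build $M_{\L}$ as a loop over increasing step budgets $k = 1, 2, 3, \dots$ that, within each iteration, first tries the substitution-based machine of \Cref{M_subst} and only falls back to the heap-based machine of \Cref{M_heap} when forced to; all work tapes are reused between iterations, so the total space is the \emph{maximum} over iterations rather than their sum. For a fixed $k$ I first fix a space threshold $m := m(k,\size{s})$ and call $M_{\text{subst}}$ on $(k,m,s)$. The decisive design choice is to let $m$ be a fixed constant times the worst-case space that $M_{\text{heap}}$ consumes on $s$ in $k$ steps; by \Cref{M_heap} this is a fixed polynomial $p(\size{s}, k)$, hence $m$ is polynomial in $\size{s}$ and $k$. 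Three outcomes are possible. If $M_{\text{subst}}$ returns a program $P$ with $P \gg t$, I output the closure $(P,0)$ together with the empty heap $[]$; since $t$ is a closed normal form, \Cref{bound-inst} yields $\inst{[]}{0}{t}{0}{t}$ and therefore $(P,0) \gg_{[]} t$, a valid answer. If it halts in \emph{space bound not reached}, I move on to $k+1$. If it halts in \emph{space bound reached}, I run $M_{\text{heap}}$ on $(k,s)$, outputting its heap and closure on success and moving on to $k+1$ on failure.

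I would establish termination and correctness by looking at the first budget with $k \ge 4\cdot\Time{s} + 2$. For such $k$ the outcome \emph{space bound not reached} is impossible, as it would force $k < 3\Time{s} + 1$ by \Cref{M_subst}; hence $M_{\text{subst}}$ either already returns the normal form (and we are done) or reports \emph{space bound reached}, which by \Cref{M_subst} certifies $m \le 2\Space{s}$, and then $M_{\text{heap}}$ at the same $k \ge 4\Time{s} + 2$ succeeds by \Cref{M_heap}. No iteration can return a wrong value: a return of $M_{\text{subst}}$ guarantees $k \ge 3\Time{s} + 1$ and $m \ge \Space{s}$, and a success of $M_{\text{heap}}$ guarantees $k \ge 4\Time{s} + 2$, so in both cases the produced value genuinely represents $t$. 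Thus the loop halts at some $k_{\text{final}} = \bigO{\Time{s}}$.

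For the resource bounds everything hinges on the interlock created by the choice of $m$. For \textbf{space}: whenever $M_{\text{subst}}$ runs, its space is $\bigO{\min(m,\Space{s}) + \log m + \log k}$ by \Cref{M_subst}, where $\min(m,\Space{s}) \le \Space{s}$ and $\log m, \log k = \bigO{\log\Time{s} + \log\size{s}}$. Since reduction is deterministic and $s$ terminates, the $\Time{s} + 1$ intermediate terms are pairwise distinct and of size at most $\Space{s}$, and there are only $2^{\bigO{\Space{s}}}$ terms of size at most $\Space{s}$, so $\Time{s} \le 2^{\bigO{\Space{s}}}$ and hence $\log\Time{s} = \bigO{\Space{s}}$; together with $\log\size{s} \le \size{s} \le \Space{s}$ this keeps $M_{\text{subst}}$ within $\bigO{\Space{s}}$. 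Whenever $M_{\text{heap}}$ runs, $M_{\text{subst}}$ has just certified $\Space{s} \ge m/2$, and by the choice of $m$ the space of $M_{\text{heap}}$ is at most $m \le 2\Space{s}$. As tapes are reused, the overall space is $\bigO{\Space{s}}$. For \textbf{time}: in every iteration $m = p(\size{s}, k)$ is polynomial in $\size{s}$ and $k \le k_{\text{final}} = \bigO{\Time{s}}$, so $\min(m,\Space{s}) \le m = \poly{\size{s}, \Time{s}}$, and \Cref{M_subst} bounds a single run by $\bigO{k \cdot \poly{\min(m,\Space{s})}} = \poly{\size{s}, \Time{s}}$; here the $\min$ is essential, as it caps the cost even when $\Space{s}$ is exponential because of size explosion. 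Similarly \Cref{M_heap} bounds a single run of $M_{\text{heap}}$ by $\bigO{\poly{\size{s}, k}} = \poly{\size{s}, \Time{s}}$. Summing over the $\bigO{\Time{s}}$ iterations keeps the total time $\poly{\size{s}, \Time{s}}$.

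The heart of the argument, and the step I expect to be most delicate, is precisely this choice of threshold $m$. It must be large enough that a \emph{space bound reached} signal genuinely witnesses $\Space{s} \ge m/2$, so that deferring to the heap machine -- whose space grows with $k$ -- still costs only $\bigO{\Space{s}}$; this is the formal counterpart of the intuition that on size-exploding terms the pointer overhead of the heap machine is negligible. At the same time $m$ must stay polynomial in $\size{s}$ and $k$ so that, through the $\min$ in the time bound of \Cref{M_subst}, the substitution machine never pays for a possibly exponential $\Space{s}$. Taking $m$ to be the heap machine's own space bound closes exactly this loop, and the remaining subtlety is verifying that the binary counters themselves fit into $\bigO{\Space{s}}$, which relies on the determinism-and-termination bound $\Time{s} \le 2^{\bigO{\Space{s}}}$.
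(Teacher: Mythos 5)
Your proposal is correct and follows essentially the same route as the paper: the same interleaving loop over increasing $k$, the same choice of threshold $m$ as the heap machine's own space bound $\size{s}\cdot p(k)$, the same termination argument at $k \geq 4\Time{s}+2$, and the same use of $\log\Time{s} \in \bigO{\Space{s}}$ (which the paper isolates as a separate theorem, proved exactly by your determinism-and-counting argument) to tame the binary counters. The only cosmetic difference is that you invoke the precise guarantee $m \leq 2\Space{s}$ from the \emph{space bound reached} outcome, where the paper's write-up is slightly looser at that point; your version is if anything cleaner.
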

\begin{proof}\label{M_comb}
  Let $p$ be the polynomial such that the machine from \Cref{M_heap} runs in space $\bigO{\size{s}\cdot p(k)}$.
  Then the combined machine executes the following algorithm:
  \begin{enumerate}
  \item Initialise $k:=0$ (in binary)
  \item\label{M_comb_loop} Compute $m:=\size{s}\cdot p(k)$ (in binary)
  \item\label{M_comb_subst} Run $M_{\text{subst}}$ on $s$, $k$ and $m$.
    \begin{itemize}
    \item If $M_{\text{subst}}$ computes the normal form $t$, output $(\gamma{t},0)$ and an empty heap $[]$ and halt.
    \item If $M_{\text{subst}}$ halts with \emph{space bound not reached}, set $k:=k+1$ and go to \textbf{\ref{M_comb_loop}}.
    \item If $M_{\text{subst}}$ halts with \emph{space bound reached}, continue at \textbf{\ref{M_comb_heap}}.
    \end{itemize}
  \item\label{M_comb_heap} Run $M_{\text{heap}}$ on $s$ and $k$.
    \begin{itemize}
    \item If this computed a closure and a heap representing $t$, output that and halt.
    \item Otherwise, set $k:=k+1$ and go to \textbf{\ref{M_comb_loop}}.
    \end{itemize}
  \end{enumerate}
  First, we show that if this machine halts, its output is a closure-heap pair representing the normal form $t$ of $s$: If the machine halts during \textbf{\ref{M_comb_subst}}, the output is a representation of the normal form by \Cref{M_subst} and \Cref{bound-inst}. If it halts during \textbf{\ref{M_comb_heap}}, it does so by \Cref{M_heap}.
  
  Second, we analyse termination and the time complexity of this machine. As intermediate step,  we analyse the run time for a fixed $k$.
  Step \textbf{\ref{M_comb_loop}} takes time $\bigO{\poly{\size s,k}}$, and the size of~$s$ can be computed from its encoding in straightforward fashion. Using \Cref{M_subst}, Step \textbf{\ref{M_comb_subst}} takes time
\begin{onlyfullversioncomment}{}
  \begin{align*}
    \bigO{k\cdot\poly{\min{(m,\Space{s})}}}
    &\subseteq\bigO{k\cdot\poly{m}}\\
    &=\bigO{k\cdot\poly{\size{s}\cdot p(k)}}\\
    &\subseteq\bigO{k\cdot\poly{\size{s},k}}&&\text{$p$ \emph{is} a polynomial}\\
    &\subseteq \bigO{\poly{\size{s},k}}
  \end{align*}
\end{onlyfullversioncomment}
\begin{onlyicalpversion}
  \begin{align*}
    \bigO{k\cdot\poly{\min{(m,\Space{s})}}}
    \subseteq\bigO{k\cdot\poly{m}}
    =\bigO{k\cdot\poly{\size{s}\cdot p(k)}}
    \subseteq \bigO{\poly{\size{s},k}}
  \end{align*}
\end{onlyicalpversion}
  If Step \textbf{\ref{M_comb_heap}} is executed, this takes time $\bigO{\poly{\size{s},k}}$ by \Cref{M_heap}. This means for arbitrary~$k$, one iteration of the described algorithm can be computed in time $\bigO{\poly{\size s,k}}$.
  
  The algorithm will eventually halt: We consider $k=4\Time{s}+2$, which is larger than the two values required in \Cref{M_subst} and \Cref{M_heap}: By \Cref{M_subst}, the machine does halt during Step \textbf{\ref{M_comb_subst}}, unless $m < \Space{s}$. In the latter case, \textbf{\ref{M_comb_heap}} is tried. Then, by \Cref{M_heap}, as $k$ is large enough, we have that $M_\text{heap}$ indeed halts with a closure-heap pair.

  Summing up the run time of each iteration, we have that the machine terminates in time
  \begin{align*}
    \bigO{\sum_{k=0}^{4\Time{s}+2}\left( \poly{\size s,k} \right)}\subseteq \bigO{\Time{s}\cdot( \poly{\size s,\Time{s}})} \subseteq \bigO{\poly{\size s,\Time{s}}}
  \end{align*}

  Third, we analyse the space complexity
  of this machine. Again, we first analyse one iteration for a fixed $k$.
  Step \textbf{\ref{M_comb_loop}} takes space $\bigO{\log(m)}$, since we use binary numbers.
  By \Cref{M_subst}, Step \textbf{\ref{M_comb_subst}} takes space $\bigO{\min{(m,\Space{s})}+\log{m}+\log{k}} \subseteq \bigO{\Space{s}+\log{m}+\log{k}}$.
  If Step \textbf{\ref{M_comb_heap}} is executed, then $m < \Space{s}$.
  By \Cref{M_heap}, this step runs in space $\bigO{m} \subseteq \bigO{\Space{s}}$.
  \begin{onlyfullversioncomment}{}
    So, we can compute the space consumption of a single iteration as:
  \begin{align*}
    &\bigO{\log m + \log k + \Space{s}} \\
    &= \bigO{\log{(\size{s}\cdot p(k))}+\log{k}+\Space{s}} && \text{definition $m$}\\
    & \subseteq \bigO{\log{\size{s}} + \log (p(k)) +\log{k} + \Space{s}} && \begin{onlyinternaldetails}{} \text{logarithm of products} \end{onlyinternaldetails}\\
    & = \bigO{\log (p(k)) +\log{k} + \Space{s}} && \text{as $\size{s} \leq \Space{s}$}\\
    & \subseteq \bigO{\log{k} + \Space{s}} && \text{$\log (p(k)) \in \bigO{\log{k}}$ as $p$ polynomial}                     
  \end{align*}
\end{onlyfullversioncomment}
\begin{onlyicalpversion}{}
    Since $\bigO{\log m} = \bigO{\log (\size{s}\cdot p(k))} \subseteq \bigO{\log{\size{s}} + \log (p(k))} \subseteq \bigO{ \Space{s} + \log{k}}$, the space consumption of a single iteration is $\bigO{\log m + \log k + \Space{s}} \subseteq \bigO{\log{k} + \Space{s}}$.
\end{onlyicalpversion}
  
 Overall, we have that the whole machine runs in space (the last equation is by \Cref{space-bounds-time}):
 \[
    \bigO{\max_{0 \leq k \leq 4\Time{s}+2}\left( \log{k} + \Space{s} \right)}\subseteq \bigO{ \log{\Time{s}}+\Space{s}}
    = \bigO{\Space{s}} \hfill\popQED
  \]
\end{proof}

Note that the machine only terminates for terminating terms, making this a full simulation also for diverging terms.
For terms with $\Time{s} \not\in \bigO{ \Space{s}}$ it is crucial that the machine tracks the step number $k$ in binary, because it would need $\Omega{\Time s}$ space otherwise.
This suffices due to the following theorem, which is proved in the appendix:

\begin{theorem}\label{space-bounds-time}
  $\log{\Time{s}} \in \bigO{\Space{s}}$.
\end{theorem}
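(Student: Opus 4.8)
The plan is to prove the equivalent exponential bound $\Time{s} \le 2^{\bigO{\Space{s}}}$ for every closed term $s$ with a normal form and then take logarithms. The whole argument rests on a single counting bound that is made possible by the determinism of reduction in \L.

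First I would fix the (unique) reduction sequence $s = s_0 \red s_1 \red \cdots \red s_k = \lambda x.u$ witnessing $\Time{s} = k$ and $\Space{s} = \max_i \size{s_i}$. The central observation is that the terms $s_0,\ldots,s_k$ are pairwise distinct. Indeed, suppose $s_i = s_j$ with $i < j \le k$. Since \L is deterministic, every term has at most one reduct, so the equality propagates along the sequence by induction on $t$: as long as the relevant terms still reduce, $s_{i+t} = s_{j+t}$. Taking $t = k-j$ gives $s_{i+(k-j)} = s_k$; but $i + (k-j) < k$ since $i < j$, so $s_{i+(k-j)}$ is not yet a normal form and still reduces, contradicting that $s_k = \lambda x.u$ is irreducible. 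Hence all $k+1$ terms are distinct.

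Second, I would bound the number of terms that can occur. By definition of $\Space{s}$, every $s_i$ satisfies $\size{s_i} \le \Space{s} =: m$, so it suffices to count terms of size at most $m$. Using that $\gamma$ is injective (\Cref{gamma-inj}) and that $\size{\gamma s} \le 2\size{s} - 1$ (\Cref{size-gamma}), it is enough to count programs of size at most $2m$. A program is a list over the finite command alphabet consisting of $\ret$, $\lamb$, $\app$ together with the variable commands $\var\,n$, whose de Bruijn indices are written in unary; thus a program of size $M$ serialises into a string of length at most $M$ over a fixed finite alphabet, and there are at most $c^{M}$ such programs for an absolute constant $c$. Consequently the number of terms of size at most $m$ is at most $c^{2m} = 2^{\bigO{m}}$.

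Combining the two steps, $k+1 = |\{s_0,\ldots,s_k\}| \le 2^{\bigO{\Space{s}}}$, hence $\Time{s} = k \le 2^{\bigO{\Space{s}}}$ and therefore $\log\Time{s} \in \bigO{\Space{s}}$. The only mildly delicate point is the distinctness argument, which is exactly where determinism of \L is essential: a repeated term would force the reduction into a cycle and hence fail to reach the normal form. The counting step is then routine, once terms are encoded as strings of length linear in their size over a finite alphabet.
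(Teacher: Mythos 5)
Your proof is correct and follows essentially the same route as the paper: determinism forces the intermediate terms of the reduction to be pairwise distinct, and the number of terms of size at most $\Space{s}$ is bounded by $c^{2\Space{s}}$ by counting programs via the injective encoding $\gamma$ together with \Cref{gamma-inj} and \Cref{size-gamma}. You spell out the distinctness argument in slightly more detail than the paper, but the key ideas and the counting step are identical.
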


\begin{mproof}{space-bounds-time}
  The main insight is that for any given size, there are only exponentially many terms smaller than that size.
  As reduction is deterministic, a terminating term $s$ can not contain the same intermediate term twice.
  This bounds $\Time{s}$ by the number of terms with size smaller than $\Space{s}$, i.e.   $\Time{s} \leq c^{\Space{s}}$ for a constant $c$.%

  Now, we show that the number of terms smaller than a certain size $m$ is an exponential. We use the encoding $\gamma$ to allow us to count linear strings (programs) instead of trees (terms):
  \begin{align*}
    &\#\{t\mid \size{t} \leq m\}\\
    &= \#\{\ t       \mid 2\cdot\size{t} \leq 2 \cdot m\}\\
    &\leq \#\{ t     \mid \size{\gamma{t}} \leq 2 \cdot m\} && \text{\Cref{size-gamma}}\\
    &= \#\{\gamma t  \mid \size{\gamma{t}} \leq 2 \cdot m\} && \text{\Cref{gamma-inj}}\\
    &\leq \#\{P  \mid \size{P} \leq 2 \cdot m\}\\
    &\leq 5^{2\cdot m}
  \end{align*}
  In the last step, we use that $\#\{P  \mid \size{P} \leq n\} \leq 5^{n-1}$ for all $n>0$ by induction on $n$, where the intuition behind the $5$ is that there are four different symbols with which a program can start, and that variables require a fifth symbol to encode the index in unary.
Thus the claim holds for  $c=5^2$.  
\end{mproof}

The simulation of L on Turing machines computes normal form as pair of closure and heap, as defined in \Cref{rep-rel-clos}.
It is possible to unfold this heap into a program:

\begin{lemma}\label{lem:declosure}
  There is a machine $M_{\text{\emph{unf}}}$ that, given a heap $H$ and a closure $g$ that represent~$s$, i.e.\ $g \gg_H s$, computes $s$ (explicitly encoded as $\gamma s$) in time $\bigO{\poly{\size{s},\size{H},\size{g}}}$ and space $\bigO{\size{s}\cdot(\size{g}+\size{H})}$.
\end{lemma}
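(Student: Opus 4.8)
The plan is to turn the inductive definition of the unfolding relation from \Cref{unfolding} into an explicit recursive traversal on programs. Recall that by \Cref{rep-rel-clos}, $g \gg_H s$ means $g = (P,a)$ with $P \gg t$ (so $P = \gamma t$) and $\inst{H}{0}{t}{a}{s}$; hence it suffices to compute, directly on programs, the encoding $\gamma s$ of the unfolding of $(P,a)$. I would therefore define a procedure $\mathrm{unf}$ that consumes the program $P$ of a closure command by command while maintaining the current binder depth $k$ (the same index that appears in $\inst{H}{k}{\cdot}{\cdot}{\cdot}$), emitting the corresponding commands of the output program.

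The traversal mirrors the four rules of \Cref{unfolding} one-to-one. On a command $\lamb$ the machine emits $\lamb$ and increments $k$; on the matching $\ret$ it emits $\ret$ and decrements $k$ (this balanced matching of $\lamb$ and $\ret$ is exactly the bracket-matching that $\phi$ performs, cf.\ \Cref{phi-gamma}); on $\app$ it emits $\app$. On a variable command $\var\,n$ it compares $n$ with $k$: if $n < k$ the variable is locally bound and $\var\,n$ is emitted unchanged (first rule); if $n \geq k$ the machine looks up $H[a,n-k]$, obtaining a value closure $(Q,b)$, recursively unfolds $(Q,b)$ at depth $0$, and splices the resulting program in place (second rule). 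Crucially, by \Cref{inst-bound} with $k=0$ every unfolded value is closed, so no de Bruijn shifting is needed when splicing it under binders; correctness of the splicing at the level of programs is then exactly the compatibility of term and program substitution from \Cref{subst-gamma}. Overall correctness of $\mathrm{unf}$ follows by induction on the derivation of $\inst{H}{k}{t}{a}{s}$.

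For the complexity analysis, the output has size $\size{\gamma s} \leq 2\size{s}$ by \Cref{size-gamma}, and every emitted command is produced either by copying one command of an input program or by a single heap lookup $H[a,n-k]$, which can be computed in time $\bigO{\poly{\size{H}}}$ by walking down the linked environment list. Hence the total running time is $\bigO{\poly{\size{s},\size{H},\size{g}}}$. For space I would realise the recursion with an explicit stack whose frames record the program suffix and environment address currently being traversed; each such program is a subprogram stored in the closure or on the heap and thus has size $\bigO{\size{g}+\size{H}}$, while the stack depth is bounded by the nesting depth of substituted values, which is at most the structural depth of $s$ and hence $\bigO{\size{s}}$. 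This yields the claimed space bound $\bigO{\size{s}\cdot(\size{g}+\size{H})}$.

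The main obstacle I anticipate is the space bound rather than correctness: one must argue that the recursion stack never grows beyond $\bigO{\size{s}}$ frames and that no intermediate datum exceeds $\bigO{\size{g}+\size{H}}$, i.e.\ that the unfolding is carried out without ever materialising a super-linear intermediate copy of an environment. This relies on recomputing value closures \emph{on demand} from $H$ rather than storing expanded copies, and on the closedness of unfolded values (\Cref{inst-bound}) to keep each recursive call independent of its surrounding context.
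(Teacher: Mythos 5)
Your proposal is correct and follows essentially the same route as the paper: both turn the unfolding relation of \Cref{unfolding} into a command-by-command traversal of programs (the paper's function $f_H\,P\,a\,k$), handle the $n\geq k$ case by a heap lookup followed by a recursive unfolding of the stored body under a fresh $\lamb\ldots\ret$ pair, and obtain the space bound from an explicit recursion stack of depth $\bigO{\size{s}}$ whose frames are suffixes of programs stored in $g$ or $H$. The only cosmetic differences are that the paper bounds the stack depth by noting that every recursive call emits at least one output symbol (rather than via the structural depth of $s$) and does its correctness induction on the unfolded term rather than on the derivation; neither changes the substance.
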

\begin{mproof}{lem:declosure}
  We first consider a partial function $f_H P a k$ that computes the unfolding, but on programs instead of terms:

  \begin{align*}
    f_H [] a k ~&:=~ [] \\
    f_H (\app::P) a k ~&:=~ \app::f_H P a k \\
   f_H (\ret::P) a (\natS k)~&:=~ \ret::f_H P a k \\
    f_H (\lamb::P) a k~&:=~ \lamb::f_H P a (\natS k) \\
    f_H (\var\,n::P) a k~&:=~ \lamb::f_H Q b 1 \con \ret::f_H P a k && \text{if $n \geq k$ and $H[a,n-k]=(Q,b)$} \\
    f_H (\var\,n::P) a k ~&:=~ \var\,n ::f_H P a k && \text{if $n < k$}
  \end{align*}
  For this set of equations, we can show \emph{if $\inst H a k s {s'}$, then $f_H (\gamma s \con Q) a k = \gamma {s'} \con f_H Q a k $} by induction on $s'$. With $Q=[]$, this means that $(P,a) \gg_H \lambda s'$ implies $f_H P a 1 = \gamma {s'}$. So $f$ indeed computes the unfolding on programs.

  Implementing $f$ in Turing machines, we first note that during execution, all considered $k$, $a$ and $P$ are bound is bound by $\size{g}+\size{H}$, as all addresses come from $g$ or $H$ and $k$ can not be larger than the largest program in $g$ or $H$. In the equation using $H[\cdot,\cdot]$, an additional explicit stack is needed to remember $P$ for after the recursive call on $Q$. This stack is bound in length by $\bigO{\size{s}}$, as every recursive call computes at least one symbol of the result. 
 This means that the algorithm runs in space $\bigO{\size{s}\cdot(\size{g}+\size{H})}$.

Furthermore, each equation produces a symbol of the result, so the total number of calls on $f$ is bound by $\size{s}$. Every equation, except the one where $H[\cdot,\cdot]$ occurs, performs a constant number of operations. This other equation needs to traverse $H$ at most $n$ times before recurring, where $n$ is the largest de Bruijn index occurring. 
  In total, this means that the algorithm runs in time $\bigO{\poly{\size{s},\size{g},\size{H}}}$.
\end{mproof}

\section{Simulating Turing Machines in L}\label{sec:TM_in_L}
\begin{onlyicalpversion}
  \FK{TODO:move to appendix}
\end{onlyicalpversion}
The remaining direction of the proof of the strong invariance thesis requires us to prove that Turing machines can be simulated with $\L$ consuming only a constant overhead in space and a polynomial overhead in time with respect to our measures $\Space{\cdot}$ and $\Time{\cdot}$.

Accattoli and Dal Lago~\cite{ADL} show that counting head-reductions is an invariant time measure.
In the associated technical report, they give a linear simulation of Turing machines in the deterministic $\lambda$-calculus, a fragment of the $\lambda$-calculus where all weak evaluation strategies coincide.
Although they treat variables as values, reduction in L also coincides, because all considered terms are closed.
The construction uses standard Scott encodings $\encode{x}$ for strings $x$ and is explained in all detail in~\cite{lago_encoding_2017}, spelling out all intermediate terms during simulation explicitly.

It turns out that this construction also only has a constant factor overhead in space w.r.t our measure $\Space{\cdot}$.
This can easily be verified by checking all intermediate terms spelled out in the proofs of~\cite{lago_encoding_2017}.
One has to take care that a linear amount of steps (i.e. all steps annotated with $\bigO{\cdot}$ or $\Theta(\cdot)$ instead of constants) does not introduce a super-linear space overhead.
This is the case, because all such sequences of steps only use substitutions where the substituted variable occurs at most once, effectively decreasing the term size.
Note that since names in the simulation are all distinct, the translation to de Bruijn indices has no overhead.
Thus the simulation is linear in time \textit{and space}:

\begin{theorem}\label{thm:easy_direction}
Let $f:\Sigma^\ast \rightarrow \Sigma^\ast$ be a function that is computable by a Turing machine $\mathcal{M}$ in time $\mathcal{T}$ and in space $\mathcal{S}$. Then there exists an $\L$-term $\overline{\mathcal{M}}$ such that for every $x \in \Sigma^\ast$ we have that
\begin{enumerate}
\item $\overline{\mathcal{M}}~\encode{x} \red^\ast \encode{f(x)}$,
\item $\Space{\overline{\mathcal{M}}~\encode{x}} \in \bigO{|x|+\mathcal{S}(|x|)}$, and
\item $\Time{\overline{\mathcal{M}}~\encode{x}} \in \bigO{|x|+\mathcal{T}(|x|)}$.
\end{enumerate}
\end{theorem}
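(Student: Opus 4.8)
The plan is to reuse the existing simulation of Turing machines in the deterministic $\lambda$-calculus due to Accattoli and Dal Lago~\cite{ADL,lago_encoding_2017} and to verify that, on top of its already-established linear time overhead, it also respects the natural space measure $\Space{\cdot}$ up to a constant factor. First I would take their term $\overline{\mathcal{M}}$ built from the standard Scott encodings $\encode{\cdot}$, for which \cite{lago_encoding_2017} already proves $\overline{\mathcal{M}}~\encode{x} \red^\ast \encode{f(x)}$ together with a bound of $\bigO{|x| + \mathcal{T}(|x|)}$ on the number of head reductions. Since every term arising in the simulation is closed, the reduction relation of $\L$ coincides with the weak/deterministic reduction used in their analysis, so both the step count and the sequence of intermediate terms are literally identical. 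This immediately yields claims (1) and (3), with $\Time{\overline{\mathcal{M}}~\encode{x}}$ equal to their step count.

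The genuine work is claim (2). I would go through the intermediate terms spelled out explicitly in the proofs of~\cite{lago_encoding_2017} and bound the size of each. For the constant-length phases this is routine. The delicate point is the reduction subsequences whose length is only bounded by $\bigO{\cdot}$ or $\Theta(\cdot)$, such as those scanning the input data or the transition table: a priori a linear number of $\beta$-steps could inflate the working term superlinearly, which is exactly the kind of growth that the natural space measure would charge for. The load-bearing observation, which I would isolate as the key lemma, is that in all of these phases every substitution $\subst{s}{0}{t}$ replaces a variable that occurs at most once in $s$, so that $\size{\subst{s}{0}{t}} \leq \size{s} + \size{t}$ and the contracted redex is in fact affine. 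Consequently each such step merely relocates a piece of already-present data rather than duplicating it, and the largest intermediate term stays within a constant factor of the encoded configuration, i.e. within $\bigO{|x| + \mathcal{S}(|x|)}$.

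Finally I would address the passage to the de Bruijn representation of $\L$ used here: because the construction of~\cite{lago_encoding_2017} uses pairwise distinct bound-variable names, translating it to de Bruijn indices is a faithful relabelling that changes neither the number of steps nor the shape of any intermediate term, and, under the unary-index size convention of \Cref{def:measures}, affects sizes only by a constant factor. Combining the three observations yields the claimed linear overhead in both time and space.

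The step I expect to be the main obstacle is the space bound: it rests entirely on a careful, term-by-term audit of the concrete simulation in~\cite{lago_encoding_2017}, confirming the affine-usage property for every non-constant reduction phase. This property is precisely what rules out the size explosion that the measure $\Space{\cdot}$ would otherwise penalise, so establishing it uniformly across the construction is the crux of the argument.
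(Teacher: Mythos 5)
Your proposal matches the paper's argument essentially verbatim: the paper likewise takes $\overline{\mathcal{M}}$ from Theorem 5.5 of~\cite{lago_encoding_2017}, notes that reduction in $\L$ coincides with theirs on closed terms, handles the de Bruijn translation via the distinct-names observation, and justifies the space bound by checking that all non-constant-length reduction phases only substitute variables occurring at most once. The only difference is emphasis — the paper relegates the audit of intermediate terms to a remark rather than an isolated lemma — so your proposal is correct and takes the same route.
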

\begin{proof}
  Take $\overline{\mathcal{M}}$ as in Theorem 5.5. in~\cite{lago_encoding_2017}.
\end{proof}

\section{The Weak Call-By-Balue $\lambda$-Calculus is Reasonable}\label{sec:reasonable}

We explain how existing simulations of Turing machine in the $\lambda$-calculus already have polynomial time and constant factor space overhead in~\ref{sec:TM_in_L}.
With both the simulations, we are now able to show Theorem~\ref{thm:intro_main}, that is, the invariance thesis for the weak call-by-value $\lambda$-calculus.
\begin{onlyicalpversion} We give the full proof in~\Cref{sec:fullproof} and only sketch it here:
\begin{proof}[Proof Sketch of Theorem 2]
  Let $s_f$ compute $f$.
  We construct a Turing machine $M_f$ which first executes $M_{\textsf{L}}$ (Theorem~\ref{thm:hybrid_simulation}) on $s_f ~ \encode {x}$ and then uses $M_{\text{unf}}$ (Lemma~\ref{lem:declosure}) to distinguish the normal form $\encode \True$ from $\encode \False$.
  Since the size of both $\encode \True$ and $\encode \False$ is a constant $b$ only depending on the size of the (fixed) alphabet, we obtain the time consumption $\bigO{\poly{\size{s},\Time{s}}+ \poly{b, \size{H},\size{g}}} \subseteq \bigO{\poly{\mathcal{T}(|x|)}}$ and space consumption
  $\bigO{\Space{s} + b \cdot (\size{H} + \size{g})} \subseteq \bigO{\mathcal{S}(|x|)}$.
  
  For the converse direction, let $\mathcal M$ compute $f$ in time $\mathcal{T}$ and space $\mathcal{S}$.
  By~\Cref{thm:easy_direction}, $\overline{\mathcal{M}}$ computes $f$ in space $\bigO{|x|+\mathcal{S}(|x|)}$, and time $\bigO{|x|+\mathcal{T}(|x|)}$.
  We have $\bigO{|x|+\mathcal{S}(|x|)} = \bigO{\mathcal{S}(|x|)}$ and $\bigO{\mathcal{T}(|x|)} =\bigO{|x|+\mathcal{T}(|x|)}$ as both $\mathcal{S}$ and~$\mathcal{T}$ are contained in $\Omega(n)$.
\end{proof}
\end{onlyicalpversion}
\begin{proof}
Let $\Sigma$ be a finite alphabet such that $\{\True,\False\}\subseteq \Sigma$ and let $f:\Sigma^\ast \rightarrow \{\True,\False\}$ be a function. Furthermore, let $b = \max\{\size{\encode{\True}},\size{\encode{\False}}\}$ and $\mathcal{T},\mathcal{S}\in \Omega(n)$. Note that $b$ is a constant only depending on the fixed alphabet $\Sigma$.

For the first direction, we assume that $f$ is $\L$-computable in time $\mathcal{T}$ and space $\mathcal{S}$. By definition, there is hence a term $s_f$ such that for all $x \in \Sigma^\ast$ we have that
\[s_f \encode{x} \red^\ast \encode{f(x)}~~\text{and}~~ \Time{s_f \encode x} \leq \mathcal{T}(|x|) ~~\text{and} ~~ \Space{s_f \encode x} \leq \mathcal{S}(|x|) \,.\]
We construct a Turing machine $M_f$ as follows. On input $x$, $M_f$ executes $M_\L$ on the (closed) term $s := s_f \encode{x}$, which computes a heap $H$ and a closure $g$ such that  $g\gg_H \encode{f(x)}$ in time $\bigO{\poly{\size{s},\Time{s}}}$ and space $\bigO{\Space{s}}$, by Theorem~\ref{thm:hybrid_simulation} -- note that $s_f$ as well as $M_\L$ are hard-coded in $M_f$. We observe that
\begin{equation}\label{eq:surprisingly_important}
\size{g}+\size{H} \in \bigO{\poly{\size{s},\Time{s}}}~~~\text{ and }~~~ \size{g}+\size{H} \in \bigO{\Space{s}} \,,
\end{equation}
where the former holds as writing down $g$ and $H$ cannot take more time than the overall running time bound $\bigO{\poly{\size{s},\Time{s}}}$ and the latter is due to the space bound $\bigO{\Space{s}}$ of~$M_f$.
After that, $M_f$ executes $M_\text{unf}$ on $H$ and $g$ which yields $\encode{f(x)}$ and finally, depending on whether $\encode{f(x)}=\encode{\True}$ or $\encode{f(x)}=\encode{\False}$, $M_f$ outputs $\True$ or $\False$ accordingly. By Lemma~\ref{lem:declosure}, the final steps take time $\bigO{\poly{b, \size{H},\size{g}}}$ and space $\bigO{b\cdot(\size{g}+\size{H})}$. Now the final time consumption is given by

\begin{align}
\bigO{\poly{\size{s},\Time{s}}+ \poly{b, \size{H},\size{g}}} 
& \leq \bigO{\poly{\size{s},\Time{s}}+ \poly{b,\size{s},\Time{s}}} \label{eq:m1}\\
~&\leq \bigO{\poly{|x|,\mathcal{T}(|x|)}} \label{eq:m2} \\
~&\leq \bigO{\poly{\mathcal{T}(|x|)}}\,, \label{eq:m3}
\end{align}
where (\ref{eq:m1}) is due to Equation~(\ref{eq:surprisingly_important}), (\ref{eq:m2}) holds as $\size{s_f}$ and $b$ are constants and (\ref{eq:m3}) follows from the fact that $\mathcal{T}\in \Omega(n)$.
\noindent The overall space consumption is bounded by
\begin{align}
\bigO{\Space{s} + b \cdot (\size{H} + \size{g})} 
& \leq \bigO{(b+1) \cdot \Space{s}} \label{eq:m4}\\
~&\leq \bigO{\mathcal{S}(|x|)} \label{eq:m5} \,,
\end{align}
where (\ref{eq:m4}) is due to Equation~(\ref{eq:surprisingly_important}) and (\ref{eq:m5}) holds as $b$ is a constant.\\
$~$

For the converse direction, we assume that $f$ can be computed by a Turing machine $\mathcal{M}$ in time $\mathcal{T}$ and space $\mathcal{S}$. We invoke Theorem~\ref{thm:easy_direction} to obtain a term $\overline{\mathcal{M}}$ which shows that $f$ is $\L$-computable in space $\bigO{|x|+\mathcal{S}(|x|)}$, and time $\bigO{|x|+\mathcal{T}(|x|)}$. We conclude the proof by observing that $\bigO{|x|+\mathcal{S}(|x|)} = \bigO{\mathcal{S}(|x|)}$ and $\bigO{\mathcal{T}(|x|)} =\bigO{|x|+\mathcal{T}(|x|)}$ as both, $\mathcal{S}$ and~$\mathcal{T}$ are contained in $\Omega(n)$.
\end{proof}

\section{Related and Future Work}

We have already mentioned the recent long line of work by Accattoli, Dal Lago, Sacerdoti Coen,  Guerriri and Martini (for an overview see~\cite{accattoli2018efficiency}) analysing reasonable time measures and implementations of several $\lambda$-calculi.

Type systems for call-by-name and call-by-value $\lambda$-calculi can be used to logically characterise complexity classes  (P~\cite{asperti2002intuitionistic}, LOGSPACE~\cite{10.1007/11874683_40}, PSPACE~\cite{gaboardi2008logical}).
Connecting these insights with our measures would make it even more feasible to use \L as a formal basis for complexity theory, which we plan to do as future work, building on existing formalisations of computability theory~\cite{Forster17}.

There is recent work in investigating strategies to evaluate open terms, for instance open call-by-value, which is reasonable for time~\cite{accattoli2017implementing, accattoli2015relative}, but the question for space is open.
On the more applied side, there is work on time and space profiling based on lazy graph reduction~\cite{sansom1995time} in Haskell.
More recent work uses a graph-based cost-semantics used for space-profiling~\cite{spoonhower_space_2008}, based on earlier measures in~\cite{blelloch1995parallelism}.
Moreover, computation in sub-linear space with an external memory has been studied~\cite{dal_lago_functional_2010}, which we do not cover in this paper.

And finally, the full $\lambda$-calculus can be translated into weak call-by-value e.g. using a CPS translation.
The longstanding question whether the natural time and space measures for the $\lambda$-calculus are reasonable remains open.
We want to investigate whether our results can contribute to an answer.

\bibliography{bib}

\newpage

\appendix
\def\inappendix{}

\begin{onlyicalpversion}
\section{Proof of \Cref{thm:intro_main}}\label{sec:fullproof}
\FK{TODO: move proof back}
\end{onlyicalpversion}

\section{Big-step characterisation of reduction}
\setCoqFilename{LM.L}
While the characterisation of our time and space measure in terms of a normalising reduction $s_0 \red \ldots \red s_k$ are intuitive, a big-step characterisation allows for easy, inductive analyses of the abstract machines evaluating $\L$ in \Cref{sec:abstract_machines}. 
  
\begin{definition}[Time Measure][timeBS]
  \label{def:time-bs}
  \begin{mathpar}
  \inferrule*{~}{\TimeBS{\lambda s}{0}{\lambda s}}
  \and
  \inferrule*{\TimeBS{s}{k_1}{\lambda s'}
    \and \TimeBS{t}{k_2}{\lambda t'}
    \and \TimeBS{\subst{s'}{0}{\lambda t'}}{k_3}{u}}
  {\TimeBS{s t}{k_1+k_2+1+k_3}{u}}
\end{mathpar}
\end{definition}

\begin{definition}[Space Measures][spaceBS]
  \label{def:space-bs}
  \begin{mathpar}
  \inferrule*{~}{\SpaceBS{\lambda s}{|\lambda s|}{\lambda s}}
  \and
  \inferrule*{\SpaceBS{s}{m_1}{\lambda s'}
    \and \SpaceBS{t}{m_2}{\lambda t'}
    \and \SpaceBS{\subst{s'}{0}{\lambda t'}}{m_3}{u}
    \and m = \max (1+m_1+|t|,1+|\lambda s'|+m_2,m_3)}
  {\SpaceBS{s t}{m}{u}}
\end{mathpar}
\end{definition}
In the second rule, each of the three recursive assumptions could contain the largest subterm, so we take the maximum of each $m_i$, while accounting for the size of the remaining part of the term, e.g.\ during the reduction of $s$ in $s t$, the $t$ and the application itself contribute to $1 + \size{t}$ additional size by definition of the term siz.

The following lemmas allow us to use the two characterisations interchangeably:

\begin{lemma}[][timeBS_correct]\label{lem-bigstep-time}
  $\TimeBS{s}{k}{t}$ iff $s \red^k t$ and $t$ is an abstraction.
\end{lemma}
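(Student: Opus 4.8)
The plan is to prove the biconditional $\TimeBS{s}{k}{t}$ iff ($s \red^k t$ and $t$ is an abstraction) by establishing each direction separately, in both cases proceeding by induction. The key technical difficulty is that the big-step relation decomposes an application $st$ into three independent sub-evaluations ($s$ to $\lambda s'$, $t$ to $\lambda t'$, and the contracted redex $\subst{s'}{0}{\lambda t'}$ to $u$), whereas the small-step relation $\red^k$ interleaves these reductions in a fixed left-to-right order dictated by the congruence rules of \Cref{def:step}. Bridging this gap requires lemmas that let me lift single-component reductions through a surrounding context.

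For the forward direction ($\TimeBS{s}{k}{t}$ implies $s \red^k t$ with $t$ an abstraction), I would induct on the derivation of $\TimeBS{s}{k}{t}$. The base case $\TimeBS{\lambda s}{0}{\lambda s}$ is immediate since $\lambda s \red^0 \lambda s$ and $\lambda s$ is an abstraction. For the application case $\TimeBS{st}{k_1+k_2+1+k_3}{u}$, the inductive hypotheses give $s \red^{k_1} \lambda s'$, $t \red^{k_2} \lambda t'$, and $\subst{s'}{0}{\lambda t'} \red^{k_3} u$. The plan is to assemble the combined reduction $st \red^{k_1} (\lambda s')t \red^{k_2} (\lambda s')(\lambda t') \red \subst{s'}{0}{\lambda t'} \red^{k_3} u$, using the congruence rule $\frac{s \red s'}{st \red s't}$ to lift the first segment, the rule $\frac{t \red t'}{(\lambda s)t \red (\lambda s)t'}$ to lift the second (which is exactly why $s$ must reduce to a value first), and the beta rule for the single middle step. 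Summing the step counts gives precisely $k_1+k_2+1+k_3$, and $u$ is an abstraction by the third inductive hypothesis. This requires a small auxiliary closure lemma stating that $\red^k$ is preserved under the two evaluation contexts.

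For the backward direction, I would prove that $s \red^k t$ with $t$ an abstraction implies $\TimeBS{s}{k}{t}$, and here I expect the main obstacle to lie. The natural route is induction on $k$, but a direct induction stumbles because knowing only the first small-step $s \red s_1$ does not reveal how the remaining $k-1$ steps split across the sub-evaluations. I would instead rely on determinism of $\red$ together with a decomposition/inversion analysis: since the only irreducible closed terms are abstractions, any terminating reduction of an application $st$ must first reduce $s$ to an abstraction $\lambda s'$, then reduce $t$ to an abstraction $\lambda t'$, then fire the redex, and these phases partition the step sequence uniquely. Formalizing this cleanly suggests proving a \emph{standardization-style} lemma: if $st \red^k u$ with $u$ an abstraction, then there exist $k_1, k_2, k_3$ and abstractions $\lambda s', \lambda t'$ with $k = k_1+k_2+1+k_3$ such that $s \red^{k_1} \lambda s'$, $t \red^{k_2} \lambda t'$, and $\subst{s'}{0}{\lambda t'} \red^{k_3} u$. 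Once this splitting lemma is available, the backward direction follows by applying the big-step application rule and the inductive hypotheses to the three shorter sub-reductions. I would prove the splitting lemma itself by induction on $k$ with inversion on the first step, carefully using that $st$ can only reduce by one of the three rules and that a value on the left forces the second rule. The delicate bookkeeping of the step counts across this decomposition is where the proof's real work resides, though each individual inversion step is routine given determinism and the restricted shape of irreducible terms.
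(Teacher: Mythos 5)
Your forward direction is exactly the paper's: induction on the big-step derivation, using the two congruence lemmas lifting $\red^{k}$ through the contexts $\square\, t$ and $(\lambda s)\,\square$, plus the beta step in the middle. For the backward direction, however, you take a genuinely different (and heavier) route than the paper. You propose a standardization-style splitting lemma that decomposes an entire terminating reduction $st \red^k u$ into three phases with $k = k_1+k_2+1+k_3$, proved by induction on $k$ with inversion on the first step. The paper instead proves a single-step backward-closure lemma: if $s \red s'$ and $\TimeBS{s'}{k}{t}$, then $\TimeBS{s}{\natS k}{t}$, by induction on the derivation of $s \red s'$ (with inversion on the big-step judgement in the congruence cases to re-distribute the extra step into the appropriate $k_i$); the main claim then follows by a plain induction on $k$. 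Both arguments go through. Your splitting lemma is the stronger statement and makes the phase structure of call-by-value evaluation explicit, at the price of more step-count bookkeeping in one lemma; the paper's version localizes all the bookkeeping to a single prepended step, which keeps each case of the induction nearly trivial and is the more economical choice here. Also, determinism of $\red$, which you lean on rhetorically, is not actually needed in either variant --- inversion on the shape of the first step suffices.
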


\begin{mproof}{lem-bigstep-time}
  For the direction assuming $\TimeBS{s}{k}{t}$, the claim follows by induction on $\TimeBS{}{}{}$using two compatibility lemmas of $\red^k$ with term-level application, the first beeing that $s \red^k s'$ implies $s t \red^k s' t$, and the second that $t \red^k t'$ implies $(\lambda s) t \red^k (\lambda s) t'$.

  For the other direction, we first show
  \begin{claim}\label{step-TimeBS}
    If $s \red s'$ and $\TimeBS{s'}{k}{t}$, then $\TimeBS{s}{\natS k}{t}$.
  \end{claim}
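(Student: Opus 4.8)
The plan is to prove the claim by induction on the derivation of the single small-step $s \red s'$, which must be an instance of one of the three reduction rules. In each case I would invert the big-step hypothesis $\TimeBS{s'}{k}{t}$ and then reassemble a derivation for $s$ using the application rule of \Cref{def:time-bs}, absorbing the prepended step into the count $\natS k$. The induction hypothesis will be needed only in the two congruence cases.

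First, for the beta rule I have $s = (\lambda a)(\lambda b)$ and $s' = \subst{a}{0}{\lambda b}$, and no induction hypothesis is required. I would combine the two axiom derivations $\TimeBS{\lambda a}{0}{\lambda a}$ and $\TimeBS{\lambda b}{0}{\lambda b}$ with the hypothesis $\TimeBS{\subst{a}{0}{\lambda b}}{k}{t}$ via the application rule to obtain $\TimeBS{(\lambda a)(\lambda b)}{0+0+1+k}{t}$, and since $0+0+1+k = \natS k$ this is exactly the goal.

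For the left-congruence rule, $s = s_1 t_1 \red s_1' t_1 = s'$ with $s_1 \red s_1'$. Because $s'$ is an application, its big-step derivation can only end with the application rule, so inversion yields $k_1,k_2,k_3$ with $k = k_1+k_2+1+k_3$ and premises $\TimeBS{s_1'}{k_1}{\lambda u}$, $\TimeBS{t_1}{k_2}{\lambda v}$ and $\TimeBS{\subst{u}{0}{\lambda v}}{k_3}{t}$. Applying the induction hypothesis to $s_1 \red s_1'$ and the first premise upgrades it to $\TimeBS{s_1}{\natS k_1}{\lambda u}$, and re-applying the application rule gives $\TimeBS{s_1 t_1}{(\natS k_1)+k_2+1+k_3}{t}$, whose count equals $\natS k$. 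The right-congruence rule $s = (\lambda a)t_1 \red (\lambda a)t_1' = s'$ is analogous, with one twist worth anticipating: inversion forces the left premise $\TimeBS{\lambda a}{k_1}{\cdot}$ to be an axiom instance, pinning $k_1 = 0$ and the value to $\lambda a$ itself, after which I apply the induction hypothesis to the $t$-premise and recombine as before.

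The main difficulty here is bookkeeping rather than anything conceptual: the proof hinges on getting the inversion principles for \Cref{def:time-bs} exactly right (an application is only derivable by the application rule, an abstraction only by the axiom) and on checking that the reassembled step count always collapses to $\natS(k_1+k_2+1+k_3)$. Once this claim is in hand, the remaining direction of \Cref{lem-bigstep-time} follows by a short induction on the number of small steps in $s \red^k t$: the base case uses that an irreducible abstraction $t$ satisfies $\TimeBS{t}{0}{t}$ by the axiom, and the step case prepends one reduction using precisely this claim.
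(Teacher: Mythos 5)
Your proof is correct and follows exactly the route the paper takes: the paper proves this claim "by induction in $s \red s'$" with no further detail, and your case analysis over the three reduction rules, with inversion of the big-step derivation in the congruence cases and the arithmetic check that the counts collapse to $\natS k$, is precisely the intended filling-in of that induction.
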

  This claim follows by induction in $s \red s'$.

  Now, assuming $s \red^{k} \lambda t$, we can show $\TimeBS{s}{k}{t}$ by induction on $k$ using \Cref{step-TimeBS} in the case where $k>0$.
\end{mproof}

Note that especially if $\Time{s}=k$, then $\TimeBS{s}{k}{t} $ for some $t$.

We write $s \red^*_m t$ if $s$ reduces to $t$ where the largest intermediate term has size $m$.

\begin{lemma}[][spaceBS_correct]\label{lem-bigstep-space}
  $\SpaceBS{s}{m}{t}$ iff $s \red^*_m t$ and $t$ is an abstraction.
\end{lemma}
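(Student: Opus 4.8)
The plan is to follow the template of the time-measure proof \Cref{lem-bigstep-time}, replacing the bookkeeping of the step count by the bookkeeping of the maximal intermediate size $m$. As there, I would prove the two directions separately, and the reduction-side reasoning rests on three congruence facts for the size-annotated relation $\red^*_m$: \textbf{(A)} if $s \red^*_{m} s'$ then $st \red^*_{1+m+\size t} s't$; \textbf{(B)} if $t \red^*_{m} t'$ then $(\lambda s)t \red^*_{1+\size{\lambda s}+m} (\lambda s)t'$; and \textbf{(C)} transitivity, i.e.\ $s \red^*_{m_1} s'$ and $s' \red^*_{m_2} s''$ imply $s \red^*_{\max(m_1,m_2)} s''$. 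Each follows by a direct induction on the given reduction sequence, observing that pairing every intermediate term $s_i$ with a fixed context $[\cdot]\,t$ resp.\ $(\lambda s)\,[\cdot]$ adds exactly $1+\size t$ resp.\ $1+\size{\lambda s}$ to its size.

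For the direction $\SpaceBS{s}{m}{t} \Rightarrow s \red^*_m t$ with $t$ an abstraction, I would induct on the derivation of $\SpaceBS{s}{m}{t}$. The base case is immediate: $\lambda s \red^*_{\size{\lambda s}} \lambda s$ is the length-zero sequence. In the application case I would use the three sub-derivations to build $s\,t \red^* (\lambda s')\,t \red^* (\lambda s')(\lambda t') \red \subst{s'}{0}{\lambda t'} \red^* u$, annotating the first two blocks by \textbf{(A)} and \textbf{(B)} with maximal sizes $1+m_1+\size t$ and $1+\size{\lambda s'}+m_2$, the last block by the third inductive hypothesis with maximal size $m_3$, and inserting the single beta step $(\lambda s')(\lambda t') \red \subst{s'}{0}{\lambda t'}$ whose two endpoints are subsumed by the neighbouring blocks. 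Stitching everything together with \textbf{(C)} yields maximal size $\max(1+m_1+\size t,\,1+\size{\lambda s'}+m_2,\,m_3)$, which is exactly the $m$ produced by the application rule of \Cref{def:space-bs}.

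For the converse I would first isolate the key ``prepending'' claim, mirroring \Cref{step-TimeBS}: \emph{if $s \red s'$ and $\SpaceBS{s'}{m}{t}$, then $\SpaceBS{s}{\max(\size s,m)}{t}$}. Given this, the implication $s \red^*_m t \Rightarrow \SpaceBS{s}{m}{t}$ follows by induction on the length of the reduction, since prefixing one step $s \red s'$ to a sequence of maximal size $m'$ yields a sequence of maximal size $\max(\size s, m')$. The claim itself I would prove by induction on $s \red s'$, inverting the (necessarily application-shaped) derivation $\SpaceBS{s'}{m}{t}$ in the two congruence cases and rebuilding it with the inductive hypothesis plugged into the reduced component; the beta case $(\lambda p)(\lambda q)\red\subst{p}{0}{\lambda q}$ is rebuilt directly from the two trivial value-derivations for $\lambda p,\lambda q$ together with $\SpaceBS{\subst{p}{0}{\lambda q}}{m}{t}$, giving annotation $\max(\size{(\lambda p)(\lambda q)},m)=\max(\size s,m)$.

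The main obstacle is the $\max$-algebra inside the prepending claim. In each congruence case one must check that the recomputed annotation equals $\max(\size s, m)$ on the nose: for $p\,q \red p'\,q$ one uses $1+\max(\size p,m_1)+\size q = \max(\size{p\,q},\,1+m_1+\size q)$, and for $(\lambda p)\,q \red (\lambda p)\,q'$ one additionally needs the monotonicity fact $\SpaceBS{q'}{m_2}{\lambda b} \Rightarrow \size{q'}\le m_2$ (the same ``$\size s \le m$'' fact already used in the proof of \Cref{subst-space}), so that the spurious term $1+\size{\lambda p}+\size{q'}$ arising from inversion is absorbed into $1+\size{\lambda p}+m_2$. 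These are the ``tedious but straightforward'' computations flagged in \Cref{subst-space}; keeping the nested maxima aligned is the only genuinely delicate part, while everything else is a mechanical transcription of the time-measure argument.
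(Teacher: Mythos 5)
Your proposal is correct and follows essentially the same route as the paper: the forward direction by induction on the big-step derivation using the two congruence lemmas for $\red^*_m$ under application contexts, and the converse via the same prepending claim ($s \red s'$ and $\SpaceBS{s'}{m}{t}$ imply $\SpaceBS{s}{\max(\size s,m)}{t}$) proved by induction on $s \red s'$, followed by induction on the length of the reduction. Your explicit working-out of the $\max$-algebra, including the use of $\size{q'}\le m_2$ in the second congruence case, matches what the paper only flags as tedious but straightforward.
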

\begin{mproof}{lem-bigstep-space}
  For the direction assuming $\SpaceBS{s}{m'}{t}$, the claim follows by induction on $\SpaceBS{}{}{}$. In the inductive case, two compatibility lemmas of $\red^*_{m}$ with term-level application are helpful: the first beeing that $s \red^*_{m} s'$ implies $s t \red^*_{1 + m + \size{t}} s' t$, and the second that $t \red^*_{m} t'$ implies $(\lambda s) t \red^*_{1+ m +\size{\lambda s}} (\lambda s) t'$. Furthermore, the fact that $\SpaceBS{t}{m_2}{\lambda t'}$ implies $\size{\lambda t'} \leq m_2$ is needed.

  For the other direction, we first show
  \begin{claim}\label{step-SpaceBS}
    If $s \red s'$ and $\SpaceBS{s'}{m}{t}$, then $\SpaceBS{s}{\max(\size s,m)}{t}$.
  \end{claim}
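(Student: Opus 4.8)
The plan is to prove the claim by induction on the derivation of $s \red s'$ (\Cref{step}), exactly mirroring the structure of the proof of \Cref{step-TimeBS}. There are three cases, one per reduction rule. Throughout I will use the auxiliary fact $(\star)$ that $\SpaceBS{r}{m}{u}$ implies $\size{r} \leq m$; this is a routine induction on the space-measure derivation and is the same bound already invoked in the proof of \Cref{subst-space}. In the two congruence cases I will \emph{invert} the hypothesis $\SpaceBS{s'}{m}{t}$: since $s'$ is an application, only the second rule of \Cref{def:space-bs} can derive it, so inversion yields the three recursive premises together with the defining equation for $m$.

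In the beta case $s = (\lambda a)(\lambda b) \red \subst{a}{0}{\lambda b} = s'$, I combine the hypothesis $\SpaceBS{s'}{m}{t}$ (which plays the role of the third premise $\SpaceBS{\subst{s'}{0}{\lambda t'}}{m_3}{u}$) with the two axioms $\SpaceBS{\lambda a}{\size{\lambda a}}{\lambda a}$ and $\SpaceBS{\lambda b}{\size{\lambda b}}{\lambda b}$, and apply the application rule. The resulting measure is $\max(1 + \size{\lambda a} + \size{\lambda b},\, 1 + \size{\lambda a} + \size{\lambda b},\, m) = \max(\size{s}, m)$, which is precisely the claim. In the left-congruence case $s = p\,q \red p'\,q = s'$ with $p \red p'$, inversion gives $\SpaceBS{p'}{m_1}{\lambda c}$, $\SpaceBS{q}{m_2}{\lambda d}$, $\SpaceBS{\subst{c}{0}{\lambda d}}{m_3}{t}$ and $m = \max(1 + m_1 + \size{q},\, 1 + \size{\lambda c} + m_2,\, m_3)$. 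The induction hypothesis applied to $p \red p'$ upgrades the first premise to $\SpaceBS{p}{\max(\size{p}, m_1)}{\lambda c}$; reassembling with the application rule yields a measure $m' = \max(1 + \max(\size{p}, m_1) + \size{q},\, 1 + \size{\lambda c} + m_2,\, m_3)$, and since $1 + \max(\size{p}, m_1) + \size{q} = \max(\size{p\,q},\, 1 + m_1 + \size{q})$, distributing the maximum gives $m' = \max(\size{s}, m)$.

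The right-congruence case $s = (\lambda a)\,q \red (\lambda a)\,q' = s'$ with $q \red q'$ is where the argument needs care and is the main obstacle. Inverting $\SpaceBS{(\lambda a)\,q'}{m}{t}$ forces the first premise to be the axiom $\SpaceBS{\lambda a}{\size{\lambda a}}{\lambda a}$ and yields $\SpaceBS{q'}{m_2}{\lambda d}$, $\SpaceBS{\subst{a}{0}{\lambda d}}{m_3}{t}$ with $m = \max(1 + \size{\lambda a} + \size{q'},\, 1 + \size{\lambda a} + m_2,\, m_3)$. Here the subtlety is that reduction can \emph{grow} the argument, so $\size{q'}$ need not equal $\size{q}$, and the term $1 + \size{\lambda a} + \size{q'}$ appearing in $m$ has no literal counterpart after I apply the induction hypothesis to $q \red q'$ (which gives $\SpaceBS{q}{\max(\size{q}, m_2)}{\lambda d}$ and hence a reconstructed measure $m' = \max(1 + \size{\lambda a} + \size{q},\, 1 + \size{\lambda a} + m_2,\, m_3)$). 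The fix is exactly $(\star)$: from $\SpaceBS{q'}{m_2}{\lambda d}$ we obtain $\size{q'} \leq m_2$, so $1 + \size{\lambda a} + \size{q'} \leq 1 + \size{\lambda a} + m_2 \leq m'$ and the offending term is dominated. Hence $\max(\size{s}, m) = m'$, closing the case and completing the induction.
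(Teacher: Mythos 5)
Your proof is correct and takes essentially the same route as the paper: induction on the derivation of $s \red s'$, reassembling the application rule of \Cref{def:space-bs} in each case and using the bound $\SpaceBS{r}{m}{u} \Rightarrow \size{r} \leq m$ to absorb the term $1+\size{\lambda a}+\size{q'}$ in the right-congruence case. The paper leaves these $\max$-computations as "tedious but straightforward"; you have simply carried them out, including the one genuinely delicate point.
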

  This claim follows by induction in $s \red s'$. The equalities between expressions involving $\max$ are tedious to check, but follow only using the inductive hypothesis, the definition of $\SpaceBS{}{}{}$, the definition of the size of terms and that $\SpaceBS{s}{m}{t}$ implies $\size s \leq m \geq \size t$.

  Now, assuming $s \red^k_{m} \lambda t$, we can show $\SpaceBS{s}{m}{\lambda t}$ by induction on $k$. In the base case, $\lambda t \red^0_{m} \lambda t$ implies $m = \size{\lambda t}$ implies $\SpaceBS{\lambda t}{m}{\lambda t}$.

  In the inductive case, $s \red^{\natS k}_{m} \lambda t$ implies a decomposition $s \red s' \red^k_{m'} \lambda t$ for some $s',m'$ with $m=\max(\size s,m')$. Then the inductive hypothesis for $k$ is $\SpaceBS{s'}{m'}{\lambda t}$, which together with \Cref{step-SpaceBS} implies $\SpaceBS{s}{m}{\lambda t}$.
\end{mproof}

Note that especially, if $\Space{s}=m$, then $\SpaceBS{s}{m}{t}$ for some $t$.

\section{Technical Definitions and Lemmas}

\includecollection{remarks}

\section{Proofs}

\includecollection{proofs}

\section{RAM machines can consume more space than time}\label{sec:sveb}

Turing machines can not consume more space than time, since it costs a time unit to allocate a new space unit.
For RAM machines, this is different, as analysed by Slot and van Emde Boas~\cite{Slot:1984:TVC:800057.808705}.

The \emph{time consumption} of a RAM computation is the number of steps; we denote it by $\mathcal{T}$. The \emph{space consumption} is given by
\begin{align*}
~&\mathcal{S}_\mathsf{b} = \sum_{i=0}^{m} \mathsf{size}_\mathsf{b}(i,\mathsf{max}(i))\,,
\end{align*}
where $m$ is the index of the highest address for which a register was accessed and $\mathsf{max}(i)$ is the maximal content of register $\mathcal{R}[i]$ during the computation. Furthermore,
\begin{align*}
\mathsf{size}_\mathsf{b}(i,x) &:= \begin{cases} 0 &\text{if } \mathcal{R}[i] \text{ is unused}\\\log(x)+\log(i) &\text{otherwise} \end{cases}
\end{align*}
Now, intuitively, $\mathcal{S}_\mathsf{b}$ is the sum of the maximum sizes of contents of used registers and the sizes of the addresses required to access those registers. It is known that, using $\mathcal{S}_\mathsf{b}$ as space measure and $\mathcal{T}$ as time measure, RAM machines and Turing machines can simulate each other with a constant overhead in space and a polynomial overhead in time (see e.g. Section~1 in~\cite{Slot:1984:TVC:800057.808705}). Next we consider the following RAM program $P$:
$~$\\

\begin{tabular}{l}
\textbf{Input:} $x$\\
$a \leftarrow 1$;\\
\texttt{for} $i=1$ \texttt{to} $|x|$ \texttt{do} \quad \text{(where $|x|$ is the length of $x$ in binary)}\\
$~~~a \leftarrow a+a$;\\
$~~~\mathcal{R}[a] \leftarrow 1$;\\
\texttt{od};\\
\textbf{Output:} $1$
\end{tabular}
$~$\\
$~$\\
Observe that $P$ is similar to the size-exploding term $s$ we have seen in the introduction. The time consumption of $P$ is given by the following function; recall that the size of an input is given by its length.
\[ n\mapsto 2n+2 \in \bigO{n} \,.\]
However, the space consumption is given by
\begin{align*}
n \mapsto \sum_{i=0}^{2^n} \mathsf{size}_\mathsf{b}(i,\mathsf{max}(i)) = \sum_{i=0}^n \mathsf{size}_\mathsf{b}(2^i,\mathsf{max}(2^i)) = \sum_{i=0}^n \log(2^i)+1 \in \Omega(n^2)\,.
\end{align*}
Thus there are RAM machines that consume asymptotically more space than time, despite being a sequential model. 

\end{document}